\newtheorem{prop}{Proposition}[section]
\newtheorem{thm}[prop]{Theorem}
\newtheorem{lemma}[prop]{Lemma}
\newtheorem{coroll}[prop]{Corollary}
\theoremstyle{definition}
\newtheorem{defi}[prop]{Definition}
\theoremstyle{remark}
\newtheorem{rmk}[prop]{Remark}
\newtheorem{rmks}[prop]{Remarks}
\newcommand{\cD}{\mathscr{D}}
\newcommand{\E}{\mathop{{}\mathbb{E}}\nolimits}
\newcommand{\cF}{\mathscr{F}}
\newcommand{\enne}{\mathbb{N}}
\renewcommand{\P}{\mathbb{P}}
\newcommand{\Q}{\mathbb{Q}}
\newcommand{\erre}{\mathbb{R}}
\renewcommand{\geq}{\geqslant}
\renewcommand{\leq}{\leqslant}
\numberwithin{equation}{section}
\DeclarePairedDelimiter\abs{\lvert}{\rvert}
\DeclarePairedDelimiter\norm{\lVert}{\rVert}
\DeclarePairedDelimiterX\ip[2]{\langle}{\rangle}{#1,#2}
\title{On certain representations of pricing functionals}
\author{Carlo Marinelli%
  \thanks{Department of Mathematics, University College London, Gower
    Street, London WC1E 6BT, United Kingdom. URL: \texttt{goo.gl/4GKJP}}}
\date{\normalsize September 12, 2021}
\begin{document}
\maketitle
\begin{abstract}
  We revisit two classical problems: the determination of the law of
  the underlying with respect to a risk-neutral measure on the basis
  of option prices, and the pricing of options with convex payoffs in
  terms of prices of call options with the same maturity (all options
  are European). The formulation of both problems is expressed in a
  language loosely inspired by the theory of inverse problems, and
  several proofs of the corresponding solutions are provided that do
  not rely on any special assumptions on the law of the underlying and
  that may, in some cases, extend results currently available in the
  literature.
\end{abstract}

\newif\ifbozza
\bozzafalse

\section{Introduction}
Let $S$, $\beta \colon \Omega \times [0,T] \to \erre_+$ denote the
price processes of an asset and of a num\'eraire (that we shall assume
to be the money-market account, for simplicity), respectively, in an
arbitrage-free market, modeled on a filtered probability space
$(\Omega,\cF,(\cF_t)_{t\in[0,T]},\P)$, where $T>0$ is a fixed time
horizon and $\cF_0$ is the trivial $\sigma$-algebra. Assuming that
pricing takes place with respect to a risk-neutral probability measure
$\Q$, the price at time zero of a European option with maturity $T$
and payoff profile $g\colon \erre_+ \to \erre$ on the asset with price
process $S$ is given by
\[
\pi(g) = \E_\Q \beta_T^{-1} g(S_T).
\]
We shall call the map $g \mapsto \pi(g)$, defined on the set of all
measurable functions $g$ such that the right-hand side is finite, the
pricing functional.

A rather general and natural question, of clear relevance also for
practical purposes, is the following: suppose that the action of $\pi$
is known on a set of functions $G$, i.e. that $\pi(g)$ is known for
every $g \in G$. Is it possible to enlarge the set of functions $G$
where $\pi$ is determined, i.e. to compute $\pi(f)$ for some functions
$f$ that do not belong to $G$? We are going to discuss some questions
of this type (although not in this generality), through the
representation of the pricing functional as a (Stieltjes) measure,
that is
\[
\pi(g) = \int_{\erre_+} g(x)\,dF(x),
\]
where $F$ is the (right-continuous version of the) distribution
function of $S_T$ with respect to the measure
$(d\Q/d\P)\beta_T^{-1} \cdot \P$, i.e. the measure with density with
respect to $\P$ equal to the stochastic discount factor.

If a collection $G$ of payoff profiles $g$ is such that the prices
$\pi(g)$ of the corresponding options are known, the set
$M:=(g,\pi(g))_{g \in G}$ will be called a measurement set, and a
measurement set that determines $F$ will be called a
representation. That is to say, if knowing $M$ allows one to
reconstruct $F$, then knowing $M$ is equivalent to knowing the pricing
functional itself, which is why we say that it is a representation (of
$dF$, or of $\pi$).  An interesting and important example of a
representation is given by prices of put options: if $G$ is composed
of all functions $g_k\colon x \mapsto (k-x)^+$, $k \in \erre_+$, then
$M$ as defined above is a representation. More precisely, if
$P(k)=\pi(g_k)$ denotes the price at time zero of the put option with
maturity $T$ and strike $k \geq 0$, then $D^+P(k)=F(k)$ for every $k
\in \erre_+$ (see~\S\ref{sec:PCF} and \S\ref{sec:distrib} below). The
problem of reconstructing the law of the underlying from option prices
was probably considered first in \cite{BreeLitz}, where the authors
showed that, denoting the price of a call option with maturity $T$ and
strike $k$ by $C(k)$, the second derivative of $C$ is the density of
$S_T$ with respect to the measure $(d\Q/d\P)\beta_T^{-1} \cdot \P$,
i.e. the first derivative of $F$ (all involved functions are
implicitly assumed to be sufficiently regular in
\cite{BreeLitz}). This result, known as the Breeden-Litzenberger
formula, has found many applications, e.g. in static hedging,
non-parametric density estimation, and local volatility models (see,
e.g., \cite{AL,Boss,Itkin} and references therein, as well as \cite{TaVii},
where an interesting extension to the multidimensional setting is
presented and further references regarding static hedging are given).
In general, $F$ is not of class $C^1$, hence $C$ is not of class
$C^2$, but, if they are, then the formula in \cite{BreeLitz} follows
immediately from $D^+P=F$ and the put-call parity relation. Note,
however, that for pricing purposes it suffices to determine $F$ rather
than its derivative, and the relation $D^+P=F$ is obtained here
without any a priori assumptions on $F$.

The reconstruction of $F$ from a set of option prices is interesting
in its own right, but sometimes less information is enough for the
problem at hand (roughly speaking, this is just the idea behind static
hedging). Using the above terminology, if one needs a measurement set
$M$, it may be possible to determine a measurement set $M'$ that
contains $M$, without necessarily recovering $F$ first. The simplest
example is the pricing of options with continuous piecewise linear
payoff profile. Another one is the pricing of options with payoff
function equal to the difference of convex functions in terms of call
options. Even though, in the latter case, the measurement set of all
call options is already a representation, there is an alternative
pricing formula that avoids the differentiation of $C$, which might be
preferable for numerical purposes. Such pricing formula for options
with convex payoff profiles is not new, but we give nonetheless
several proofs: a very concise one, a longer one that (hopefully)
highlights the role of convexity, and a third one that is extremely
simple if sufficient regularity is present. We also show that similar
ideas can be used to ``localize'' the pricing formula, i.e. to price
options with payoff profile that is piecewise the difference of convex
functions.

The main content is organized as follows: we collect in
\S\ref{sec:prel} some useful (elementary) facts from measure theory,
convexity, and the theory of distribution.
Definitions, motivations, basic properties, and examples pertaining to
pricing functionals, measurement sets, and representations are given
in \S\ref{sec:pf}.
Qualitative properties of the functions $P$ and $C$, as defined above,
are discussed in \S\ref{sec:PCF}, without any assumption on
$F$. Moreover, we show that $F$ is the right derivative of $P$ by two
methods, that is, using the integration by parts formula for
c\`adl\`ag functions of finite variation and by a denseness argument,
respectively.
In \S\ref{sec:conv} we revisit the fact that prices of options with
convex profile are determined by prices of call options for all
positive strikes. This is proved in two ways: by an integration
argument, that uses essentially only the Fubini theorem, and via the
above-mentioned integration by parts formula.
The results of the previous two sections are derived by yet another
approach in \S\S\ref{sec:PCF}-\ref{sec:conv}, that is, using the
theory of distributions. An interesting aspect of this method is that
it provides a particularly handy way to make computations, also in
cases that do not directly follow from the setups of the previous two
sections.
We conclude in \S\ref{sec:X} considering a kind of representation
where a sequence of measures converging towards $F$ intervenes. This
is motivated by an empirical problem on non-parametric pricing of
(vanilla) options treated in \cite{cm:Herm}.

\ifbozza\newpage\else\fi
\section{Preliminaries}
\label{sec:prel}
We shall use some elementary facts from measure theory and convexity,
that we recall for convenience.
Let $(X,\mathscr{A})$ and $(Y,\mathscr{B})$ be measurable spaces, and
$\mu$ a measure on the former. If $\phi \colon X \to Y$ is a
measurable function, then the image measure or push-forward of $\mu$
through $\phi$ is the measure on $(Y,\mathscr{B})$ defined by
$\phi_*\mu \colon B \mapsto \mu(\phi^{-1}(B))$. If $g \colon Y \to
\overline{\erre}$ is a measurable function, then
\begin{equation}
  \label{eq:cava}
  \int_Y g\,d\phi_*\mu = \int_X g \circ \phi \,d\mu,
\end{equation}
in the sense that $g$ is $\phi_*\mu$-integrable if and only if
$g \circ \phi$ is $\mu$-integrable, and in this case the integrals
coincide (see, e.g., \cite[\S2.6.8]{Cohn:MT}).  Interpreting
precomposition as pull-back, hence writing
$\phi^\ast g := g \circ \phi$, and using the notation
$m(f) := \ip{m}{f} := \ip{f}{m} := \int f\,dm$ for any function $f$
integrable with respect to a measure $m$, the identity \eqref{eq:cava}
can be written in the simple and suggestive form
\[
\ip{\phi^\ast g}{\mu} = \ip{g}{\phi_\ast\mu}.
\]

\smallskip

We shall extensively use an integration-by-parts formula for
Lebesgue-Stieltjes integrals. Let the functions $F$, $G\colon \erre
\to \erre$ be c\`adl\`ag (i.e. right-continuous with left limits) and
with finite variation (i.e. having bounded variation on every bounded
interval). Then, for any two real numbers $a<b$,
\begin{equation}
  \label{eq:ibp}
  F(b)G(b) - F(a)G(a) = \int_{\mathopen]a,b\mathclose]} G(x-)\,dF(x) +
  \int_{\mathopen]a,b\mathclose]} F(x)\,dG(x).
\end{equation}
If $G$ is continuous, one can obviously replace $G(x-)$ by
$G(x)$. Whenever $dG$ is an atomless measure we shall just write
$\int_a^b F\,dG$ instead of denoting the interval of integration as a
subscript. Moreover, we set $\erre_+ := [0,+\infty\mathclose[$.

\smallskip

Let $I \subset \erre$ be an open interval and $f \colon I \to \erre$
be a convex function. Then $f$ is everywhere left- and
right-differentiable, that is, for any $x \in I$ the left and right
derivatives
\[
  D^-f(x) := \lim_{h \to 0-} \frac{f(x+h)-f(x)}{h}, \qquad
  D^+f(x) := \lim_{h \to 0+} \frac{f(x+h)-f(x)}{h}
\]
exists and are finite, and $D^-f(x) \leq D^+f(x)$. Moreover, both
$D^-f$ and $D^+f$ are increasing functions, $D^-f$ is left-continuous,
and $D^+f$ is right-continuous. It follows that $f$ is differentiable
except at the countable set of points where $D^-f$ and $D^+f$ do not
coincide. The subdifferential of $f$ at $x$ is defined as
\[
\partial f(x) = \bigl\{ z \in \erre : f(y) - f(x) \geq z(y-x) 
\quad \forall y \in I \bigr\}.
\]
It can be shown that $\partial f(x)=[D^-f(x),D^+f(x)]$ and that, for
any $x_1,x_2 \in I$, $x_1 < x_2$, it holds $D^+f(x_1) \leq D^-f(x_2)$,
hence $\partial f(x_1) \cap \partial f(x_2)$ is either empty, if the
last inequality is strict, or equal to $\{D^-f(x_2)\}$, if the last
inequality is an equality.
The right derivative $D^+f$, being increasing, hence of bounded
variation, and right-continuous, defines a (Lebesgue-Stieltjes)
measure $m$ via the prescription
\[
D^+f(b)-D^+f(a) =: m(\mathopen]a,b]), \qquad a,b \in I, \; a \leq b.
\]
In this sense, the positive measure $m$ can be interpreted as the
second derivative of $f$. For a proof of these results on convex
functions see, e.g., \cite[Chapter~1]{Simon:Conv}.

\smallskip

We shall also use elementary properties of distributions, for which we
refer to, e.g., \cite{Schwartz:ThD}.  Assume that
$f \colon \erre \to \erre$ is piecewise of class $C^1$, with
discontinuity points $(x_n)$. Using the standard notation
$\Delta f(x) := f(x+)-f(x-)$, one has (see
\cite[p.~37]{Schwartz:ThD})
\[
  f' = \sum_n \Delta f(x_n)\delta_{x_n} + [f'],
\]
where $f'$ stands for the derivative of $f$ in the sense of
distributions, and $[f']$ for the derivative in the classical sense
over the open intervals $\mathopen]x_n,x_{n+1}\mathclose[$ where $f$
is continuously differentiable (a corresponding result for
higher-order derivatives can be obtained by induction). If $f$ is a
function with finite variation, then $f'$ coincides, in the sense of
distributions, with the Lebesgue-Stietjes measure $df$. We shall need
to consider functions $f$ that are piecewise differences of convex
functions, i.e. of the form
\[
f = \sum_n f_n, \qquad f_n \colon [a_n,a_{n+1}\mathclose[ \to \erre,
\]
where the sum is (at most) countable, and for every $n$ there exist
convex functions $h^1_n$, $h^2_n$ on $\erre$ such that
$f_n = h^1_n-h^2_n$ on $[a_n,a_{n+1}\mathclose[$. Then $f$ is
c\`adl\`ag and has finite variation. We are going to compute the first
and second distributional derivatives of $f$. It is clear that it is
enough to consider, without loss of generality, $f$ with support equal
to $[a,b]$ and $f=h^1-h^2$ on $[a,b\mathclose[$, for $h^1$ and $h^2$
convex functions on $\erre$.
\begin{lemma}
  Let $f\colon \erre \to \erre$ be a c\`adl\`ag function with finite
  variation, and $[a,b] \subset \erre$ a compact interval. The
  distributional derivative of $f1_{[a,b]}$ is
  \[
    f' = df\big\vert_{\mathopen]a,b\mathclose[}
    + f(a) \delta_a - f(b-)\delta_b.
  \]
\end{lemma}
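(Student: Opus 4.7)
The plan is to test the claimed identity against an arbitrary $\varphi \in C_c^\infty(\erre)$ and use the integration-by-parts formula \eqref{eq:ibp} recalled in the preliminaries.

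By definition of the distributional derivative,
\[
\ip{(f\mathbf{1}_{[a,b]})'}{\varphi} = -\ip{f\mathbf{1}_{[a,b]}}{\varphi'} = -\int_{\mathopen]a,b\mathclose]} f(x)\varphi'(x)\,dx,
\]
where I have rewritten the Lebesgue integral over $[a,b]$ as an integral over $\mathopen]a,b\mathclose]$, which is harmless since Lebesgue measure is atomless. Next I apply \eqref{eq:ibp} with $F=f$ and the continuous function $G=\varphi$, for which $G(x-)=\varphi(x)$ and $dG=\varphi'\,dx$, obtaining
\[
f(b)\varphi(b) - f(a)\varphi(a) = \int_{\mathopen]a,b\mathclose]} \varphi\,df + \int_{\mathopen]a,b\mathclose]} f\varphi'\,dx,
\]
so that
\[
\ip{(f\mathbf{1}_{[a,b]})'}{\varphi} = f(a)\varphi(a) - f(b)\varphi(b) + \int_{\mathopen]a,b\mathclose]} \varphi\,df.
\]

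The last step is to separate the contribution of the atom of $df$ at $b$: since $df$ assigns mass $f(b)-f(b-)$ to $\{b\}$, one has
\[
\int_{\mathopen]a,b\mathclose]} \varphi\,df = \int_{\mathopen]a,b\mathclose[} \varphi\,df + \varphi(b)\bigl(f(b)-f(b-)\bigr).
\]
Substituting and canceling the $\varphi(b)f(b)$ terms yields
\[
\ip{(f\mathbf{1}_{[a,b]})'}{\varphi} = f(a)\varphi(a) - f(b-)\varphi(b) + \int_{\mathopen]a,b\mathclose[} \varphi\,df,
\]
which is precisely the action of the distribution $df\big\vert_{\mathopen]a,b\mathclose[} + f(a)\delta_a - f(b-)\delta_b$ on $\varphi$.

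The computation is short and the only point that requires attention is the asymmetric treatment of the two endpoints: the mass that $df$ puts at $\{b\}$ must be pulled out explicitly and combined with the boundary term $-f(b)\varphi(b)$ to produce the coefficient $-f(b-)$ of $\delta_b$; at $a$, by contrast, the endpoint $a$ is already excluded from the domain of integration in \eqref{eq:ibp}, so the coefficient $f(a)$ appears unchanged. No regularity on $f$ beyond càdlàg and finite variation is needed, since \eqref{eq:ibp} holds in that generality.
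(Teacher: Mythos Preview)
Your proof is correct and follows essentially the same approach as the paper: apply the definition of the distributional derivative, use the integration-by-parts formula \eqref{eq:ibp} with $F=f$ and $G=\varphi$, then split off the atom of $df$ at $b$ to convert the $-f(b)\varphi(b)$ boundary term into $-f(b-)\varphi(b)$. Your closing remark about the asymmetric treatment of the endpoints is a nice clarification that the paper leaves implicit.
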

\begin{proof}
  Let us denote $f1_{[a,b]}$, for the purposes of this proof only,
  simply by $f$. The distributional derivative $f'$ is defined by the
  identity $\ip{f'}{\phi} = -\ip{f}{\phi'}$ for every $\phi \in
  \cD(\erre)$, where
  \[
    \ip{f}{\phi'} = \int_\erre f \phi' = \int_a^b f\,d\phi
    = \int_{\mathopen]a,b]} f\,d\phi
  \]
  and, thanks to the integration-by-parts formula,
  \[
    f(b)\phi(a) - f(a)\phi(a) = \int_{\mathopen]a,b]} f\,d\phi
    + \int_{\mathopen]a,b]} \phi\,df,
  \]
  hence
  \[
    \int_{\mathopen]a,b]} f\,d\phi
    = - \int_{\mathopen]a,b]} \phi\,df + f(b)\phi(b) - f(a)\phi(a),
  \]
  i.e.
  \begin{align*}
    \ip{f'}{\phi} 
    &= \int_{\mathopen]a,b]} \phi\,df + f(a)\phi(a)
      - f(b)\phi(b)\\
    &= \int_{\mathopen]a,b\mathopen[} \phi\,df + \phi(b)(f(b)-f(b-))
      + f(a)\phi(a) - f(b)\phi(b)\\
    &= \int_{\mathopen]a,b\mathopen[} \phi\,df + f(a)\phi(a) - f(b-)\phi(b)
    \qedhere
  \end{align*}
\end{proof}

\begin{prop}
  \label{prop:dcd}
  Let $f\colon \erre \to \erre$ be a difference of convex functions
  and let $[a,b] \subset \erre$ be a compact interval. Then the first
  and second distributional derivatives of $f1_{[a,b]}$, denote by
  $f'$ and $f''$, respectively, are
  \begin{align*}
    f' &= df\big\vert_{\mathopen]a,b\mathclose[}
         + f(a) \delta_a - f(b)\delta_b,\\
    f'' &= dD^+f\big\vert_{\mathopen]a,b\mathclose[}
          + f(a)\delta'(a) - f(b)\delta'_b
          + D^+f(a)\delta_a - D^+f(b-)\delta_b.
  \end{align*}
\end{prop}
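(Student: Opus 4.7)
The plan is to derive the first formula as an immediate corollary of the preceding lemma and to obtain the second formula by differentiating $f'$ distributionally once more, applying the same lemma a second time to the function $D^+f$.

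For the first derivative, I would observe that since $f$ is a difference of two convex functions defined on all of $\erre$, it is continuous, hence $f(b-)=f(b)$. The preceding lemma then yields $f' = df\vert_{\mathopen]a,b\mathclose[} + f(a)\delta_a - f(b)\delta_b$ at once.

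For the second derivative, I would differentiate $f'$ termwise as a distribution. The point masses $f(a)\delta_a$ and $-f(b)\delta_b$ distributionally differentiate to $f(a)\delta'_a$ and $-f(b)\delta'_b$. The substantive piece is the derivative of $df\vert_{\mathopen]a,b\mathclose[}$. Here I would use that a difference of convex functions is (locally) absolutely continuous, so that on $[a,b]$ the measure $df$ coincides with $D^+f(x)\,dx$; consequently, as a distribution, $df\vert_{\mathopen]a,b\mathclose[}$ coincides with the $L^1_{\mathrm{loc}}$ function $D^+f\cdot 1_{[a,b]}$. Since $D^+f = D^+h^1 - D^+h^2$ is a difference of right-continuous nondecreasing functions, it is itself c\`adl\`ag and of finite variation, so the preceding lemma applies and gives
\[
  \bigl(D^+f\cdot 1_{[a,b]}\bigr)' = dD^+f\big\vert_{\mathopen]a,b\mathclose[} + D^+f(a)\delta_a - D^+f(b-)\delta_b.
\]
Summing the three contributions yields the stated expression for $f''$.

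The main obstacle is the bookkeeping of left and right limits at the endpoints: $f$ being continuous produces no boundary atom in $f'$, but $D^+f$ typically does jump, and in particular at $b$ the integration-by-parts identity \eqref{eq:ibp} naturally delivers the value $D^+f(b-)$ rather than $D^+f(b)$, whereas at $a$ the right-continuity of $D^+f$ means no such left-limit adjustment is needed. An equivalent and self-contained route, which I would indicate as a check, is to pair directly with a test function $\phi \in \cD(\erre)$, write $\ip{f''}{\phi} = \int_a^b f\phi''\,dx$, integrate by parts once using absolute continuity of $f$, and then apply \eqref{eq:ibp} to the product $D^+f\cdot \phi$; the same endpoint asymmetry will surface there, giving an independent verification that the coefficients of $\delta_a$, $\delta_b$, $\delta'_a$, $\delta'_b$ are as stated.
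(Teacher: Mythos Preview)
Your proposal is correct and follows essentially the same line as the paper's proof. The only organizational difference is that you invoke the preceding lemma a second time (applied to $D^+f\cdot 1_{[a,b]}$), whereas the paper redoes that integration by parts explicitly via \eqref{eq:ibp}; the ``alternative check'' you sketch at the end is in fact precisely the paper's computation.
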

\begin{proof}
  The function $f$ is continuous on $\erre$ and, being the difference
  of convex functions, has finite variation. The previous lemma then
  yields the expression for $f'$. To compute $f''$, let us recall that
  $f$ is absolutely continuous on with (classical) derivative equal to
  $D^+f$ a.e., so that
  \begin{align*}
    \ip{f''}{\phi} = - \ip{f'}{\phi'}
    &= - \int_{\mathopen]a,b\mathclose[} \phi'\,df - f(a)\phi'(a)
      + f(b-)\phi'(b)\\
    &= - \int_{\mathopen]a,b\mathclose[} D^+f\,d\phi - f(a)\phi'(a)
      + f(b-)\phi'(b).
  \end{align*}
  Since $D^+f$ is c\`adl\`ag and of finite variation, the
  integration-by-parts formula yields
  \[
    D^+f(b)\phi(b) - D^+f(a)\phi(a) = \int_{\mathopen]a,b]} D^+f\,d\phi
    + \int_{\mathopen]a,b]} \phi\,dD^+f,
  \]
  hence
  \begin{align*}
    - \int_{\mathopen]a,b\mathclose[} D^+f\,d\phi
    &= - \int_{\mathopen]a,b]} D^+f\,d\phi\\
    &= \int_{\mathopen]a,b]} \phi\,dD^+f + D^+f(a)\phi(a) - D^+f(b)\phi(b)\\
    &= \int_{\mathopen]a,b\mathclose[} \phi\,dD^+f
    + D^+f(a)\phi(a) - D^+f(b-)\phi(b).
  \end{align*}
  Collecting terms concludes the proof:
  \[
    \ip{f''}{\phi} = \int_{\mathopen]a,b\mathclose[} \phi \,dD^+f
    - f(a) \phi'(a) + f(b-) \phi'(b)
    + D^+f(a) \phi(a) - D^+f(b-) \phi(b).
    \qedhere
  \]
\end{proof}
\begin{rmk}
  Let $a<b$ be real numbers and $I$ be any interval with endpoints $a$
  and $b$. Note that $f1_{[a,b]}$ coincides in $\cD'$ with $f1_I$, for
  any choice of $I$. Therefore their distributional derivatives are
  also the same.
\end{rmk}

\ifbozza\newpage\else\fi
\section{Pricing functionals, measurements and representations}
\label{sec:pf}
\subsection{Pricing functionals}
Let $(\Omega,\mathscr{F},\P)$ be a probability space endowed with a
filtration $(\mathscr{F}_t)_{t\in[0,T]}$, with $T>0$ a fixed time
horizon, and let $S \colon \Omega \times [0,T] \to \erre_+$ be the
price process of an asset. We assume also that $\beta \colon \Omega
\times [0,T] \to \mathopen]0,\infty\mathclose[$ is the price process
of a further asset used as num\'eraire, normalized with $\beta_0=1$
and uniformly bounded from below, and that the market where both
assets are traded is free of arbitrage, so that the set $\mathsf{Q}$
of probability measures $\Q$ equivalent to $\P$ such that the
discounted price process $\beta^{-1}S$ is a $\Q$-local martingale is
not empty. For any $\mathscr{F}_T$-measurable claim $X$ such that
$\beta_T^{-1}X$ is bounded, the value $\E_\Q \beta_T^{-1}X$ is an
arbitrage-free price at time zero of $X$ for every $\Q \in
\mathsf{Q}$. From now on we shall fix a measure $\Q \in
\mathsf{Q}$. For any measurable bounded function $g \colon \erre_+ \to
\erre$, the (bounded) $\mathscr{F}_T$-measurable random variable
$g(S_T)$ is the payoff of a European option on $S$ with payoff profile
$g$, the price of which at time zero is $\E_\Q \beta_T^{-1} g(S_T)$.

We shall call \emph{pricing functional} the map
\[
  \pi \colon g \longmapsto \E_\Q \beta_T^{-1} g(S_T)
  = \E\frac{d\Q}{d\P}\beta_T^{-1} g(S_T),
\]
defined first on the set of measurable bounded functions $g \colon
\erre_+ \to \erre$. 
Let $\mu$ be the measure on $\cF_T$ defined by
\[
\mu(A) := \E_\Q \beta_T^{-1} 1_A,
\]
that is, $\mu$ is the measure on $\cF_T$ the Radon-Nikodym
derivative of which with respect to $\P$ is
\[
\frac{d\mu}{d\P} = \frac{d\Q}{d\P} \beta_T^{-1}.
\]
Note that $\mu$ is (in general) not a probability measure: in fact,
$\mu(\Omega) = \E_\Q \beta_T^{-1}$ need not be one, and could be
interpreted as the price at time zero of a zero-coupon bond maturing
at time $T$ with face value equal to one. In this case, $\mu$ is a
sub-probability measure, i.e. $\mu(\Omega) \leq 1$.
The pricing functional can then be written as
\[
\pi \colon g \longmapsto \int_\Omega g(S_T)\,d\mu.
\]
Denoting the push-forward of $\mu$ through $S_T$ by $S_\ast\mu$, i.e.
the measure on the Borel $\sigma$-algebra of $\erre$ defined by
\[
S_\ast\mu \colon B \longmapsto \mu(S_T^{-1}(B)),
\]
one has
\[
\E_\Q \beta_T^{-1} g(S_T) = 
\int_\Omega g(S_T)\,d\mu = \int_\erre g\,d(S_\ast\mu).
\]
Therefore, denoting the distribution function of the measure
$S_\ast\mu$ by $F$, i.e.
\[
F(x) := \mu(S_T \leq x) = \E_\Q \beta_T^{-1} 1_{\{S_T \leq x\}},
\]
the pricing functional can be written as
\[
\pi \colon g \longmapsto \int_{\erre_+} g\,dF.
\]
In other words, the pricing functional can be identified with $F$, or
with $S_\ast\mu$. Note also that the pricing functional can naturally
be extended to every $g \in L^1(dF)$. If $g \geq 0$, as is mostly the
case for payoff functions, then $\pi(g)$ is simply the norm
of $g$ in $L^1(dF)$.

\begin{rmk}
  If one just assumes that there exists a pricing functional, defined
  as a positive linear functional on a certain set of functions, then
  an integral representation of $\pi$ holds in many cases. This is
  essentially the content of various forms of the Riesz representation
  theorem. For instance, if $\pi$ is continuous on $C_0(\erre)$, the
  Banach space of continuous functions that are zero at infinity,
  endowed with the supremum norm, then there exists a unique finite
  Radon measure $m$ on $\erre$ such that $\pi(g)=\int g\,dm$ for every
  $g \in C_0(\erre)$. If $\pi$ is continuous on $C_c(\erre)$, the
  space of continuous functions with compact support with the topology
  of uniform convergence on compact sets, then there exists a unique
  Radon measure $m$ on $\erre$ (not necessarily finite) such that
  $\pi(g)=\int g\,dm$ for every $g \in C_0(\erre)$. 
  On the other hand, if $\pi$ is just assumed to be continuous on
  $\mathscr{L}^\infty(\erre)$, the Banach space of bounded functions
  with the supremum norm, then an integral representation of $\pi$ is
  only possible with respect to a bounded additive set function, not a
  measure. A completely analogous situation arises if continuity of
  $\pi$ is assumed on $L^\infty(\erre)$. On the other hand, if $\pi$
  is assumed to be weak* continuous on $L^\infty(\erre)$, then there
  exists $\phi \in L^1_+(\erre)$ such that $\pi(g)=\int \phi g$ for
  every $g \in L^\infty(\erre)$. However, the weak* convergence of a
  sequence $(g_n)$ in $L^\infty$, i.e. the existence of $g \in
  L^\infty$ such that
  \[
    \lim_{n \to \infty} \int fg_n = \int fg \quad \forall f \in
    L^1(\erre),
  \]
  does not seem to have a clear economic interpretation.
\end{rmk}

\subsection{Measurements and representations}
Depending on the problem at hand, the pricing functional
$\pi \colon g \mapsto dF(g)$ may or may not be known. If $dF$ is
assumed a priori to be known, for instance in the Black-Scholes model
with given volatility, then $\pi$ is trivially known. Analogously, one
may assume that $dF$ belongs to a certain family of finite measures
$(dF_\theta)_{\theta\in\Theta}$ parametrized by a finite-dimensional
parameter $\theta$, and by statistical procedures an estimate
$\hat{\theta}$ is obtained, so that $dF_{\hat{\theta}}$ is then used
in the definition of $\pi$, thus falling back to the previous (quite
tautological) case. Strictly speaking, this procedure produces an
estimator of the pricing functional, but we are not going to discuss
any issues pertaining to statistics. On the other hand, in many other
situations, for instance when no parametric assumptions on $dF$ are
made, the pricing functional $\pi$ is only known through its action on
a set of ``test functions'' $(g_j)_{j \in J}$, e.g. with $g_j$ the
payoff profile of a call or put option with a strike price indexed by
$j \in J$. The next definition is hence quite natural.
\begin{defi}
  A \emph{measurement} (of $dF$) is a pair $(g,\pi(g))$, where $g
  \colon \erre_+ \to \erre$ is a measurable function integrable with
  respect to $dF$. A \emph{measurement set} (of $dF$) is a collection
  of measurements.
\end{defi}
A typical situation of practical relevance is given by $(g_j)$ being a
collection of payoff profiles of (European) options. For instance, for
any $j \geq 0$, let $g_j$ be the payoff function of a put option with
strike price $j$, that is, $g_j \colon x \mapsto (j-x)^+$.  If the
price of the put option with strike $j$ is known for every $j>0$, then
we have a measurement $M=(g_j,\pi_j)_{j \in J}$ setting $J=\erre_+$,
$g_j \colon x \mapsto (j-x)^+$, and $\pi_j=dF(g_j)$.

\begin{rmks}
  a) Let $M = (g_j,\pi_j)_{j \in J}$ be a measurement set ($J$ is just
  an index set). The set of numbers $(\pi_j)$ is included in the
  definition of $M$ just for convenience, as it is clearly redundant
  being uniquely determined by $(g_j)$ and $dF$, the latter of which
  is assumed to be fixed, even though treated as unknown.
  b) A measurement set is just a subset of the graph of
  $\pi$.
  c) The term ``measurement'', by no means standard, somewhat
  mimics an analogous one used in the theory of inverse problems,
  where, in a (usually) more rigid functional setting, the expression
  ``measurement operator'' is sometimes used.
  d) In view of the linearity of integration, if $\pi$ is known on a
  set $G \subseteq L^1(dF)$, then it is known also on the vector space
  generated by $G$. Similarly, it would seem natural to augment $M$
  with its accumulation points, i.e. to take its closure, in
  $L^1(dF) \times \erre$. However, since we treat $dF$ as unknown,
  this operation would not be plausible. Some accumulation points can
  be added nonetheless, as we shall see below, so long as they are
  constructed without using $dF$ or, more precisely, assuming that all
  is known about $dF$ is (the vector space generated by) $M$.
\end{rmks}

Measurement sets can be ordered by inclusion, hence they can be
compared. If $M$ is a measurement set, the vector space generated by
$M$, itself a measurement set, will be denoted by $\hat{M}$.
\begin{defi}
  Let $M_1$ and $M_2$ be two measurement sets. One says that $M_1$ is
  \emph{finer} than $M_2$ if $\hat{M}_1$ contains $M_2$, and that
  $M_1$ and $M_2$ are \emph{equivalent} if $M_1$ is finer than $M_2$
  and $M_2$ is finer than $M_1$, i.e. if $\hat{M}_1=\hat{M}_2$. A
  \emph{representation} is a measurement set finer than
  ${(1_A,dF(A))}_{A \in \mathscr{A}}$, where $\mathscr{A}$ is any set
  of subsets of $\erre_+$ generating the Borel $\sigma$-algebra
  $\mathscr{B}(\erre_+)$.
\end{defi}
Apart from the natural inverse problem of recovering the measure $dF$
(or equivalently the function $F$) from a sufficiently rich collection
of option prices, possibly providing an algorithm to do so, it is
interesting also to describe relations between measurement sets. For
instance, if one needs $F$ only to price a certain set of options,
instead of reconstructing $F$ it could suffice to identify a
measurement set that already allows to accomplish the task. In the
simplest case, if $g$ is the payoff profile of the option to price and
$g$ belongs to the vector space generated by an available measurement
set $M$, there is clearly no need to recover $F$. In spite of its
simplicity, this is precisely how one can proceed to price options
with continuous piecewise linear payoff profile. In fact, as is well
known, these options can be priced in terms of linear combinations
(independent of $F$!) of prices of put options with strikes at the
``juncture'' points of the piecewise linear profile. A more
sophisticated fact is that call option prices for every positive
strike price allow to price option with arbitrary convex payoff. In
this case, however, if $g$ is an arbitrary convex function and
$\operatorname{pr}_1 M$, the projection on $L^1(dF)$ of the
measurement set $M$, is the vector space generated by $(x \mapsto
(x-k)^+)_{k \in \erre_+}$, it is not true in general that $g \in
\operatorname{pr}_1 M$. It is true, however, that $g$ is an
accumulation point of $\operatorname{pr}_1 M$, as discussed in
\S\ref{sec:conv} below.

It was mentioned above that it would not be meaningful to extend a
measurement set $M$ taking its closure in $L^1(dF) \times \erre$, as
$F$ is considered unknown. However, one can indeed add some cluster
points, if they are defined by procedures that do not involve $F$. In
particular, at least two possibilities exist:
\smallskip\par\noindent
(a) let
$(g_n) \subseteq \operatorname{pr_1} M$ be a sequence that converges
pointwise to $g$ and for which there exists $h \in L^1(dF)$ such that
$\abs{g_n(x)} \leq h(x)$ for all $x \in \erre_+$. The dominated
convergence theorem then implies that $g \in L^1(dF)$ and that
$\pi(g)=\lim_{n \to \infty} \pi(g_n)$;
\smallskip\par\noindent
(b) let $(g_n) \subseteq \operatorname{pr_1} M$ be such that
$g_n \uparrow g$, i.e. $(g_n)$ is an increasing sequence that
convergence pointwise to $g$, and such that $(\pi(g_n))$ is bounded
from above, i.e.  $\sup_n \pi(g_n)<\infty$. Then
$0 \leq g_n-g_0 \uparrow g-g_0$ and, by the monotone convergence
theorem,
\[
  \pi(g-g_0) = \lim_n \pi(g_n-g_0) = \sup_n \pi(g_n) - \pi(g_0) <
  \infty,
\]
hence $g-g_0 \in L^1(dF)$, i.e. $g \in L^1(dF)$ with
$\pi(g) = \sup_n \pi(g_n)$.
\smallskip\par\noindent
The cluster points constructed in (a) and (b) do not depend on knowing
$F$, hence they could reasonably be added to the measurement set
$M$. The measurement sets obtained by adding to $\hat{M}$ the cluster
points described in (a) and (b) will be denoted by $M^d$ and $M^m$,
respectively. We shall see that if $M_1$ is measurement set of all
call options, and $M_2$ the measurement set of all convex options,
then $M_1^m$ is finer than $M_2$. Since $M_2$ is finer than $M^m_1$
(the pointwise supremum of a family of convex functions is convex),
$M_1^m$ and $M_2$ are equivalent measurement sets.  In other words,
one cannot replicate a convex payoff with just call payoffs, but one
can approximate a convex payoff by a combination of call payoffs with
any pricing accuracy.

Taking suitable limits of sequences of measurements is not the only
possible way to enrich a measurement set. In fact, one can also
perform several operations on $(\pi_j)_{j\in J} \subseteq \erre$,
using the structure of $\erre$: they can for instance be added,
multiplied, and functions $\phi\colon \erre^n \to X$ can be applied to
$n$ of them, with $X$ suitable sets, and so on. Note that $M$ could
also be seen as a linear map from the space of finite measures
$\mathscr{M}^1(\erre_+)$ to $\erre^J$, mapping $dF$ to $(dF(g_j))_{j
  \in J}$. Viewing elements of $\erre^J$ as functions from $J$ to
$\erre$, the problem at hand may imply that these functions in the
codomain have additional properties, for instance they may be
monotone, or convex, or continuous, or differentiable, depending on
the inputs $(g_j)$. Depending on the range of $M$ in the codomain
$\erre^J$, different operations may be applied.  For instance, taking
derivatives on $\erre^J$ or on $C(J)$ would not make sense, but it
would make sense on $C^1(J)$, or in the a.e. sense if we knew that the
range is made of Lipschitz continuous functions. We shall see that
this point of view is also fruitful, showing that the right derivative
of put prices, seen as a function $P$ of the strike price, is equal to
$F$. We shall also see that the price of an option with arbitrary
convex payoff can be written in terms of an integral of $C$, where
$C(k)$ is the price of the call option with strike $k$.

In some cases one does not observe a measurement directly, but a
function of a measurement. This is the case, for instance, of implied
volatility. If $g_k \colon x \mapsto (k-x)^+$ is the payoff function
of a put option with strike $k$, there is a one-to-one correspondence
between $\pi_k:=\pi(g_k)$ and the (Black-Scholes) implied volatility,
given by a function $v \colon \erre_+ \to \erre_+$ such that $\pi_k =
\mathsf{BS}(S_0,k,T,v(\pi_k))$.  Here $\mathsf{BS}(S_0,k,T,\sigma)$
denotes the Black-Scholes price (at time zero) of a put option on an
underlying with price at time zero equal to $S_0$, strike $k$, time to
maturity $T$, volatility $\sigma$, and interest rate as well as
dividend rate equal to zero (or to any other constants).  In
particular, if the implied volatility is known for every strike $k>0$,
inverting the function $v$ we obtain the measurement set of put prices
$M=(g_k)_{k \geq0}$, which is a representation.  In other words,
implied volatility for all strikes uniquely determines the pricing
functional or, equivalently, the measure $dF$. We may then say, with a
slight abuse of terminology, that implied volatility is a
representation.

Let $X$ be a locally compact space and $\phi \colon \erre \to X$ be a
measurable isomorphism, i.e. a bijection such that both $\phi$ and
$\phi^{-1}$ are measurable. This is the case, for instance, if $\phi$
is a homeomorphism. Then, for any $g \in L^1(dF)$, one has
\[
  dF(g) = \ip{g}{dF} = \ip[\big]{\phi^\ast(\phi^{-1})^\ast g}{F}
  = \ip[\big]{(\phi^{-1})^\ast g}{\phi_*F}.
\]
This change of parametrization can also be interpreted in terms of
measurement sets, saying that the measurement set $M=(g_j,dF(g_j))$ of
$dF$ is in bijective correspondence with the measurement set
\[
  M' = \bigl( (\phi^{-1})^\ast g_j,\phi_\ast dF((\phi^{-1})^\ast g) \bigr)
     = \bigl( (\phi^{-1})^\ast g_j,dF(g_j) \bigr)
\]
of $\phi_\ast dF$. Even though the two measurements are isomorphic (as
sets), they may have quite different character. Let us consider, for
instance, the reparametrization from price to logarithmic return:
setting $S_T=S_0\exp(\sigma X_T + m)$, where $\sigma>0$ and $m$ are
constants, the pricing functional can be written as
\[
\pi \colon g \longmapsto \int_\erre g(S_0e^{\sigma x + m})dF_X(x),
\]
where $F_X$ is the (right-continuous) distribution function of the
measure $(X_T)_\ast \mu$, the support of which is $\erre$. If $g$ is
the payoff function of a put option with strike $k$, then
$x \mapsto g(S_0e^{\sigma x + m}) = {(k - S_0e^{\sigma x + m})}^+$
does \emph{not} have compact support. This is clearly in stark
contrast to the expression of $\pi(g)$ in terms of $dF$, where the
intersection of the supports of $g$ and $dF$ is compact. As will be
seen, several analytic arguments strongly depend on this property,
that hence cannot be used with the new parametrization, even though
the values of the corresponding integrals are the same.

Finally, we remark that it is sometimes useful to extend the
definition of measurement set adding $(dF_j)$, a collection of
(possibly signed) measures for which a relation to $dF$ is known. For
instance, let $g_k$, for any $k \geq 0$, be the payoff function of a
put option with strike price $k$, that is, $g_k \colon x \mapsto
(k-x)^+$. Moreover, let $(dF_n)_{n\in\enne}$ be a sequence of Radon
measures converging weakly to $dF$ as $n \to \infty$. Each measure
$dF_n$ can be thought of as an approximation to the law $dF$, and
$dF_n(g_k)$ as the price of a put option with strike $k$ under the
approximating law $dF_n$. If all such prices can be observed, then we
have an ``extended'' measurement set $M=(g_j,\pi_j,dF_j)_{j \in J}$,
where $J=\erre_+ \times \enne$, $F_j=F_{kn}$, $F_{kn}=F_n$ for every
$k$, $g_j=g_{kn}=g_k$ for every $n$, and
$\pi_j=\pi_{kn}=F_n(g_k)$. Note that $g_k \in C_b(\erre)$ for every
$k$, hence $F_n(g_k) \to F(g_k)$ as $n \to \infty$. In particular, if
we define the (standard) measurement set $M'=(g_j,\pi_j)_{j \in
  \erre_+}$ as $g_j\colon x \mapsto (j-x)^+$ and $\pi_j=dF(g_j)$, then
we could say that $M$ ``implies'' $M'$. That is, for every $g \in
\operatorname{pr}_1 M'$ there exists a sequence $(dF_n) \subset
\operatorname{pr}_3 M$ such that $\pi(g)=dF(g)$ is the limit of
$dF_n(g) \subset \operatorname{pr}_2 M$. An analogous example
motivated by (empirical) non-parametric option pricing is discussed in
\S\ref{sec:X} below.

\ifbozza\newpage\else\fi
\section{Put and call option prices and the pricing functional}
\label{sec:PCF}
Let us define the numerical functions $P,\,C \colon \erre_+ \to
[0,+\infty]$ by
\[
  P(k) := \int_{\erre_+} (k-x)^+ \,dF(x), \qquad
  C(k) := \int_{\erre_+} (x-k)^+ \,dF(x).
\]
Note that $P(k)$ is finite for all $k$ as $\erre_+ \ni x \mapsto
(k-x)^+$ is bounded (has even compact support), but $C(k)$ is finite
if and only
\[
\int_k^\infty x\,dF(x) < \infty,
\]
hence $C$ is everywhere finite if and only $dF$ has a finite mean
\[
\overline{dF} := \int_{\erre_+} x\,dF(x).
\]
The assumption $\overline{dF}<\infty$ also implies that $dF$ is a
finite measure, hence $F(\infty):=\lim_{x\to\infty} F(x)$ is
finite. In fact, rather obviously, $F(\infty)=F(\infty)-F(1)+F(1)$ and
\[
  F(\infty)-F(1) = \int_{\mathopen]1,\infty\mathclose[} dF
  \leq \int_{\mathopen]1,\infty\mathclose[} x\,dF(x)
  \leq \overline{dF} < \infty.
\]
The financial interpretation of $\overline{dF}<\infty$ is that
$\E_\Q\beta_T^{-1}S_T$ must be finite. This is clearly not a
limitation.  However, we shall mention the hypothesis anyway because
some of the considerations to follow may be interesting for general
$F$, irrespective of the underlying financial interpretation.

The functions $P$ and $C$ will play a central role, so we discuss some
of their properties. They are all rather basic, but they are given
here in full detail because it is simpler to prove them than to look
for a suitable reference.
\begin{prop}
  \label{prop:P}
  The function $P$ is increasing, locally Lipschitz continuous,
  Lipschitz continuous if $dF$ is finite, convex, and satisfies the
  inequality $P(k) \leq kF(k)$ for every $k \geq 0$. Moreover,
  $P(0)=0$ and
  \[
    \lim_{k \to \infty} \frac{P(k)}{k} = F(\infty),
  \]
  hence, in particular, $\lim_{k \to \infty} P(k) = \infty$.
\end{prop}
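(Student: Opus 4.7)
The plan is to verify each listed property by a direct calculation against the definition of $P$, exploiting elementary properties of the kernel $(k,x) \mapsto (k-x)^+$ and standard convergence theorems.

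First, for fixed $x \geq 0$ the map $k \mapsto (k-x)^+$ is increasing and convex on $\erre_+$; integrating against the positive measure $dF$ preserves both properties, so $P$ is increasing and convex. That $P(0)=0$ is immediate since the integrand vanishes identically when $k=0$, and the bound $P(k) \leq kF(k)$ follows from $(k-x)^+ \leq k\,1_{[0,k]}(x)$ together with $F(k) = \int_{[0,k]} dF$.

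For the Lipschitz estimates, I would exploit the pointwise bound
\[
  0 \leq (k_2-x)^+ - (k_1-x)^+ \leq (k_2-k_1)\,1_{[0,k_2]}(x)
  \qquad \text{for } 0 \leq k_1 \leq k_2,
\]
which integrates to $P(k_2) - P(k_1) \leq (k_2-k_1) F(k_2)$. Since $F$ is locally bounded, this gives local Lipschitz continuity; when $dF$ is finite, $F(k_2) \leq F(\infty) < \infty$, promoting the bound to global Lipschitz continuity with constant $F(\infty)$.

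For the asymptotic behaviour, rewrite $P(k)/k = \int_{\erre_+} (1 - x/k)^+\,dF(x)$. For each fixed $x \geq 0$, as $k$ increases the integrand $(1-x/k)^+$ is nondecreasing in $k$ and converges to $1$. Monotone convergence therefore yields $P(k)/k \to \int_{\erre_+} dF = F(\infty)$, where the identity is valid regardless of whether $F(\infty)$ is finite. Assuming the non-degenerate case $F(\infty) > 0$ (i.e.\ $dF \neq 0$), this immediately gives $P(k) \to \infty$ as $k \to \infty$.

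There is no real obstacle; the only point that requires a moment of thought is the Lipschitz estimate, where the support restriction $1_{[0,k_2]}(x)$ is essential to avoid a factor $F(\infty)$ that could be infinite, thereby yielding only \emph{local} Lipschitz continuity in the general case.
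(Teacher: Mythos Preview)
Your argument is correct and follows essentially the same approach as the paper: each property is verified directly from the kernel $(k,x)\mapsto (k-x)^+$ and preserved under integration against $dF$. The only minor deviations are that for the asymptotic $P(k)/k \to F(\infty)$ you invoke monotone convergence (which works even when $F(\infty)=\infty$) while the paper uses dominated convergence with the bound $1$, and you make the Lipschitz estimate explicit via the inequality $(k_2-x)^+-(k_1-x)^+\leq (k_2-k_1)1_{[0,k_2]}(x)$ rather than citing the $1$-Lipschitz property of $x\mapsto x^+$; both are cosmetic differences.
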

\begin{proof}
  Since $k \mapsto (k-x)^+$ is increasing for every $x \in \erre_+$
  and integration (with respect to a positive measure) is positivity
  preserving, $P$ is increasing. Similarly, as $k \mapsto k^+$ is
  $1$-Lipschitz continuous, $k \mapsto (k-x)^+$ is $1$-Lipschitz
  continuous uniformly with respect to $x$, hence $P$ is locally
  Lipschitz continuous as well and globally Lipschitz continuous if
  $F(\infty)<\infty$. To prove convexity, note that, for any $x \in
  \erre_+$, $k \mapsto k-x$ is affine, in particular convex, and $y
  \mapsto y^+$ is convex increasing, hence the composite function $k
  \mapsto (k-x)^+$ is convex. Finally, integration with respect to a
  positive measure preserves convexity, hence $P$ is convex. The
  identity $P(0)=0$ follows immediately by the definition of $P$, as
  does the estimate $P(k) \leq kF(k)$, where $F(k) \leq
  F(\infty)$. Finally,
  \[
  \frac{P(k)}{k} = \frac{1}{k} \int_{[0,k]} (k-x)\,dF(x) 
  = \int_{\erre_+} \Bigl( 1 - \frac{x}{k} \Bigr) 1_{[0,k]} \,dF(x),
  \]
  where $(1-x/k)1_{[0,k]} \to 1$ for all $x \geq 0$ as $k \to \infty$
  and $(1-x/k)1_{[0,k]} \in [0,1]$ for all $x,k \geq 0$, hence the
  dominated convergence theorem implies
  \[
    \lim_{k \to \infty} \frac{P(k)}{k} = \int_{\erre_+} dF = F(\infty).
    \qedhere
  \]
\end{proof}

\begin{prop}
  \label{prop:C}
  Assume that $\overline{dF}<\infty$. The function $C$ is decreasing,
  Lipschitz continuous, and convex. Moreover, $C(0)=\overline{dF}$ and
  $\lim_{k \to \infty}C(k)=0$.
\end{prop}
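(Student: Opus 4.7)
The plan is to prove the five assertions about $C$ in parallel with the arguments used for Proposition \ref{prop:P}, exploiting the fact that $k \mapsto (x-k)^+$ has analogous structural properties to $k \mapsto (k-x)^+$, with the decisive difference that the variable $x$ now appears with the ``good'' sign and is integrable precisely under the hypothesis $\overline{dF}<\infty$.

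First I would observe that for each fixed $x \in \erre_+$ the function $k \mapsto (x-k)^+$ is decreasing, $1$-Lipschitz in $k$ uniformly in $x$, and convex (it is the composition of the affine map $k \mapsto x-k$ with the convex increasing map $y \mapsto y^+$). Since integration against the positive measure $dF$ preserves monotonicity and convexity, $C$ inherits both properties immediately. For Lipschitz continuity, from $\lvert (x-k_1)^+ - (x-k_2)^+ \rvert \leq \lvert k_1-k_2 \rvert$ and the finiteness of $F(\infty)$ (which follows from $\overline{dF}<\infty$, as already argued in the paragraph preceding Proposition \ref{prop:P}), integrating yields
\[
  \lvert C(k_1) - C(k_2) \rvert \leq F(\infty) \lvert k_1-k_2 \rvert,
\]
so $C$ is globally Lipschitz.

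For the boundary values, $C(0) = \int_{\erre_+} (x-0)^+\,dF(x) = \int_{\erre_+} x\,dF(x) = \overline{dF}$ because $x \geq 0$ on the domain of integration. For the limit at infinity, I would apply dominated convergence: as $k \to \infty$, $(x-k)^+ \to 0$ pointwise for each $x \in \erre_+$, while $0 \leq (x-k)^+ \leq x$ and $x$ is $dF$-integrable by the standing assumption, giving $\lim_{k\to\infty} C(k) = 0$.

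There is no real obstacle here; every step is a direct analogue of one in the proof of Proposition \ref{prop:P}, and the only place where $\overline{dF}<\infty$ genuinely enters is in producing the integrable dominating function $x$ (which simultaneously makes $C(0)$ finite, $F(\infty)$ finite for the Lipschitz constant, and dominates the sequence for the limit computation).
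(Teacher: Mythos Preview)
Your proposal is correct and follows essentially the same approach as the paper: monotonicity, Lipschitz continuity, and convexity are deduced from the corresponding properties of $k \mapsto (x-k)^+$ exactly as for $P$, and $C(0)=\overline{dF}$ is immediate. The only cosmetic difference is in the limit $C(k)\to 0$: the paper uses the tail estimate $C(k)\leq\int_{[k,\infty\mathclose[} x\,dF(x)\to 0$, while you invoke dominated convergence with majorant $x$; these are equivalent formulations of the same idea.
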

\begin{proof}
  The proof of monotonicity, Lipschitz continuity, and convexity are
  entirely similar to the corresponding proof for put options, noting
  that $k \mapsto (x-k)^+$ is decreasing. The definition of $C$
  immediately implies that $C(0)=\int_{\erre_+}  x\,dF(x)$, and also that
  \[
  C(k) = \int_k^\infty (x-k)\,dF(x) \leq \int_{[k,\infty\mathclose[} x\,dF(x),
  \]
  where the right-hand side converges to zero as $k \to \infty$
  because $\int_{\erre_+} x\,dF(x)$ is finite by assumption.
\end{proof}

By a direct computation one can obtain estimates for local and global
Lipschitz constants. In fact, the $1$-Lipschitz continuity of
$x \mapsto x^+$, hence also of $k \mapsto
(k-x)^+$, yields, for any $k_1,k_2 \geq 0$,
\begin{align*}
  \abs{P(k_2)-P(k_1)} 
  &\leq \int_{[0,k_1 \vee k_2]} \abs[\big]{(k_2-x)^+ - (k_1-x)^+}\,dF(x)\\
  &= \int_{[0,k_1 \vee k_2\mathclose[} \abs[\big]{(k_2-x)^+ - (k_1-x)^+}\,dF(x)\\
  &\leq \int_{[0,k_1 \vee k_2\mathclose[} \abs{k_2 - k_1}\,dF(x)
  = \abs{k_2 - k_1} F_-(k_1 \vee k_2),
\end{align*}
where $F_-$ stands for the left-continuous version of $F$, defined by
$F_-(x) := F(x-) := \lim_{h \downarrow 0} F(x-h)$. The same estimate holds for
$P$ replaced by $C$.
One can actually show, using subdifferentials, that the Lipschitz
continuity estimates thus obtained are sharp. In fact, for any $k_1,
k_2 \geq 0$, convexity implies
\[
  P(k_2) \geq P(k_1) + \partial P(k_1)(k_2-k_1)  
\]
where\footnote{Since $\partial P(k_1)$ is in general a set, one should
  write $P(k_2) \geq P(k_1) + y(k_2-k_1)$ for every $y \in \partial
  P(k_1)$. This slight abuse of notation shall not create any harm
  though.}  $\partial$ stands for the subdifferential in the sense of
convex analysis. Hence, if $k_1 \geq k_2$, $P(k_1) \geq P(k_2)$
because $P$ is increasing, which also implies that $\partial P(x)
\subset \erre_+$ for every $x>0$, hence
\[
\abs[\big]{P(k_1) - P(k_2)} = P(k_1) - P(k_2) 
\leq \partial P(k_1) (k_1-k_2) = \partial P(k_1) \abs{k_1-k_2}.
\]
Similarly, if $k_1 \leq k_2$,
\[
\abs[\big]{P(k_1) - P(k_2)} = P(k_2) - P(k_1)
\leq \partial P(k_2) (k_2-k_1) = \partial P(k_2) \abs{k_1-k_2}.
\]
Recalling that $\partial P(k) = [D^-P(k),D^+P(k)]$ for every $k>0$, it
easily follows that
\[
\abs[\big]{P(k_1) - P(k_2)} \leq D^-P(k_1 \vee k_2) \, \abs{k_1-k_2}.
\]
As $D^+P=F$ and the left-continuous version of $D^+P$ is $D^-P$, it
follows that $D^-P=F_-$.

\medskip

We are going to show that $F$ is the right derivative of $P$, and that
a similar relation holds between the call price function $C$ and $F$.
We give two proofs, one that relies on the integration-by-parts
formula for c\`adl\`ag functions, and one a bit indirect based on a
denseness result: we show that the set of put payoffs are total in
$L^1(dF)$, i.e. that for any $g \in L^1(dF)$ there exist a sequence of
\emph{finite} linear combinations of pay payoffs that converges to $g$
in $L^1(dF)$. This connects with another formulation of representation
that we have discussed, i.e. by a kind of closure operation. Then we
show that the two approaches are in fact equivalent. A third approach,
using distributions, will be given in \S\ref{sec:distrib} below.

\subsection{Reconstruction of $F$ via integration by parts}
We shall apply the integration-by-parts formula to establish formulas
relating the distribution function $F$ and the price functions for put
and call options $P$ and $C$.
\begin{thm}
  One has $P'=F$ a.e. in $\erre_+$ and $D^+P(x)=F(x)$ for every
  $x \in \erre_+$. Moreover, if the measure $dF$ has finite mean, then
  $C'=F-F(\infty)$ a.e. in $\erre_+$ and $D^+C(x)=F(x)-F(\infty)$ for
  every $x \in \erre_+$.
\end{thm}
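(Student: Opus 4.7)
The plan is to apply the integration-by-parts formula~\eqref{eq:ibp} to rewrite $P(k)$ and $C(k)$ as Lebesgue integrals of $F$, from which right-differentiability is immediate because $F$ is increasing and right-continuous.

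For $P$, I fix $k \geq 0$ and apply~\eqref{eq:ibp} on $\mathopen]0,k\mathclose]$ with $G(x) = (k-x)^+$, which is continuous with $dG = -1_{[0,k]}\,dx$, $G(0)=k$ and $G(k)=0$. The boundary term is $-kF(0)$, so rearranging gives
\[
\int_{\mathopen]0,k\mathclose]}(k-x)\,dF(x) = -kF(0) + \int_0^k F(x)\,dx.
\]
Since $F$ vanishes on $\mathopen]-\infty,0\mathclose[$, the atom of $dF$ at $0$ has mass $F(0)$ and contributes $kF(0)$ to $P(k) = \int_{[0,k]}(k-x)\,dF(x)$; adding this back yields the primitive representation $P(k)=\int_0^k F(x)\,dx$. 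Since $F$ is increasing, for $h>0$ one has the squeeze
\[
F(k) \leq \frac{P(k+h)-P(k)}{h} = \frac{1}{h}\int_k^{k+h}F(x)\,dx \leq F(k+h),
\]
and right-continuity of $F$ gives $F(k+h)\to F(k)$ as $h\downarrow 0$, so $D^+P(k)=F(k)$ for every $k\in\erre_+$ and $P'=F$ a.e.

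For $C$, the argument is analogous but requires truncation. Assuming $\overline{dF}<\infty$ (which, as observed just before Proposition~\ref{prop:P}, guarantees that $dF$ is finite and hence $F(\infty)<\infty$), I apply~\eqref{eq:ibp} on $\mathopen]k,N\mathclose]$ with $G(x) = (x-k)^+$, continuous with $dG = 1_{[k,\infty\mathclose[}\,dx$ and $G(k)=0$. Rearranging yields
\[
\int_{\mathopen]k,N\mathclose]}(x-k)\,dF(x) = \int_k^N \bigl(F(N)-F(x)\bigr)\,dx.
\]
Since $\{k\}$ contributes nothing to $C(k)$ and $F(N)\uparrow F(\infty)$ as $N\to\infty$, monotone convergence gives $C(k) = \int_k^\infty \bigl(F(\infty) - F(x)\bigr)\,dx$, and the same squeeze as for $P$ produces $D^+C(k) = F(k)-F(\infty)$ pointwise, so $C' = F - F(\infty)$ a.e.

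The only delicate points are the bookkeeping of the possible atom of $dF$ at $0$ when moving from the half-open interval $\mathopen]0,k\mathclose]$ to $[0,k]$ in the formula for $P$, and the passage $N\to\infty$ for $C$: here $\overline{dF}<\infty$ is used both to ensure that $F(N)$ has a finite limit $F(\infty)$ and that the resulting improper integral converges. No regularity on $F$ beyond being a c\`adl\`ag distribution function is needed.
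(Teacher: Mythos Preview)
Your proof is correct and follows essentially the same route as the paper: integration by parts to obtain the primitive representations $P(k)=\int_0^k F$ and $C(k)=\int_k^\infty(F(\infty)-F)$, followed by differentiation. The only (cosmetic) difference is in the last step: the paper invokes the Lebesgue differentiation theorem to get $P'=F$ a.e.\ and then uses convexity of $P$ together with right-continuity of $F$ to upgrade to $D^+P=F$ everywhere, whereas your squeeze $F(k)\leq h^{-1}\int_k^{k+h}F\leq F(k+h)$ yields $D^+P=F$ directly from monotonicity and right-continuity of $F$, without appealing to convexity of $P$; this is arguably slightly more self-contained.
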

\begin{proof}
  Let $k \geq 0$ and $G\colon x \mapsto k-x$. The integration-by-parts
  formula
  \[
    G(k)F(k) - G(0)F(0) = \int_{\mathopen]0,k\mathclose]} G(x)\,dF(x)
    + \int_{\mathopen]0,k\mathclose]} F(x)\,dG(x),
  \]
  yields
  \begin{align*}
    \int_0^k F(x)\,dx
    &= kF(0) + \int_{\mathopen]0,k\mathclose]} (k-x)\,dF(x)\\
    &= \int_{[0,k]} (k-x)\,dF(x)\\
    &= \int_{\erre_+} (k-x)^+\,dF(x) = P(k).
  \end{align*}
  The Lebesgue differentiation theorem then implies that $P'=F$
  a.e. in $\erre_+$. Moreover, since $F$ is right-continuous by
  definition, and $P$ is convex, hence right-differentiable, we also
  have $D^+P(x)=F(x)$ for every $x \in \erre_+$.

  Obtaining a relation between $C$ and $F$ along the same lines is a
  bit more involved: if $k>0$ and $G \colon x \mapsto x-k$, one has,
  for any $a>k$,
  \[
    G(a)F(a) - G(k)F(k) = \int_{\mathopen]k,a]} G(x)\,dF(x) +
    \int_{\mathopen]k,a]} F(x)\,dG(x),
  \]
  i.e.
  \[
    (a-k)F(a) = \int_{\mathopen]k,a]} (x-k)\,dF(x) + \int_k^a
    F(x)\,dx,
  \]
  which is equivalent to
  \[
    \int_{[k,a]} (x-k)\,dF(x) = \int_k^a (F(a)-F(x))\,dx.
  \]
  Therefore, by the monotone convergence theorem,
  \begin{align*}
  \lim_{a\to\infty} \int_{[k,a]} (x-k)\,dF(x)
  &= \lim_{a\to\infty} \int_{\erre_+} 1_{[k,a]} (x-k)\,dF(x)\\
  &= \int_{[k,\infty\mathclose[} (x-k)\,dF(x)\\
  &= \int_{\erre_+} (x-k)^+\,dF(x) = C(k),
  \end{align*}
  as well as
  \[
  \lim_{a\to\infty} \int_k^a (F(a)-F(x))\,dx
  = \lim_{a\to\infty} \int_{\erre_+} 1_{[k,a]} (F(a)-F(x))\,dx
  = \int_k^\infty (F(\infty)-F(x))\,dx,
  \]
  hence
  \begin{equation}
    \label{eq:C}
    C(k) = \int_k^\infty (F(\infty)-F(x))\,dx.
  \end{equation}
  This implies $C'=F-F(\infty)$ a.e. as well as, by right continuity
  of $F$ and convexity of $C$, $D^+C(x)=F(x)-F(\infty)$ for every
  $x \in \erre_+$.
\end{proof}

The finiteness of the integral on the right-hand side of \eqref{eq:C}
is implied by the finiteness of $C(k)$, which in turn follows by the
assumption that $dF$ has finite mean. One may also easily see directly
that the last assumption implies that the integral is finite. In fact,
this produces another proof of the identity \eqref{eq:C}: by Tonelli's
theorem,
\begin{align*}
  \int_k^\infty (F(\infty)-F(x))\,dx
  &= \int_k^\infty \int_{\mathopen]x,\infty\mathclose[} dF(y)\,dx
    = \int_{[k,\infty\mathclose[} \int_k^y dx\,dF(y)\\
  &= \int_{[k,\infty\mathclose[} (k-y)\,dF(y)
    = \int_{\erre_+} (k-y)^+\,dF(y)\\
  &= C(k).
\end{align*}
The relation between $C$ and $F$ can of course be obtained also from
put-call parity, once the relation between $P$ and $F$ has been
obtained: if follows from the identity $x-k = (x-k)^+ - (k-x)^+$, upon
integrating with respect to $dF$, that
\[
  \int_{\erre_+} x\,dF(x) - k\int_{\erre_+} dF = C(k) - P(k),
\]
hence, by Lebesgue's differentiation theorem,
$-F(\infty) = C'(k) - P'(k) = C'(k) - F(k)$ for a.a. $k \in \erre_+$,
as well as $D^+C(k)=F(k)-F(\infty)$ for every $k \in \erre_+$ by the
same argument used above.

\subsection{Reconstruction of $F$ by approximation in $L^1(dF)$}
Let $V$ be the vector space generated by put payoff profiles, i.e. by
the family of functions $\erre_+ \ni x \mapsto (k-x)^+$, $k \geq
0$. We are going to show the following approximation result.
\begin{lemma}
  Let $a>0$. For any $\varepsilon>0$ there exists $\phi \in V$ such
  that
  \[
  \norm[\big]{\phi - 1_{[0,a]}}_{L^1(dF)} < \varepsilon.
  \]
\end{lemma}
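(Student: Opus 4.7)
The plan is to approximate $1_{[0,a]}$ by a ``difference quotient'' of put payoffs, taken at the upper endpoint $a$, and then to exploit the right-continuity of $F$.

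More concretely, for $\delta>0$ I would consider the function
\[
  \phi_\delta(x) := \frac{1}{\delta} \bigl( (a+\delta-x)^+ - (a-x)^+ \bigr),
\]
which lies in $V$ by construction. A direct case analysis on the three regions $x \leq a$, $a < x \leq a+\delta$, $x > a+\delta$ shows that $\phi_\delta$ equals $1$ on $[0,a]$, decreases linearly from $1$ to $0$ on the interval $[a,a+\delta]$, and vanishes for $x>a+\delta$. In particular $\phi_\delta - 1_{[0,a]}$ vanishes off $\mathopen]a,a+\delta]$ and is bounded in absolute value by $1$ there.

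From these facts one gets the pointwise bound
\[
  \abs[\big]{\phi_\delta(x) - 1_{[0,a]}(x)} \leq 1_{\mathopen]a,a+\delta]}(x),
\]
and integrating with respect to $dF$ yields
\[
  \norm[\big]{\phi_\delta - 1_{[0,a]}}_{L^1(dF)}
  \leq F(a+\delta) - F(a).
\]
Since $F$ is right-continuous at $a$, the right-hand side tends to $0$ as $\delta \downarrow 0$, so choosing $\delta$ sufficiently small gives the required $\phi := \phi_\delta \in V$.

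The only mildly delicate point is that the naive approximation $\delta^{-1}(g_a - g_{a-\delta})$ (approaching $a$ from below) would instead yield $1_{[0,a]}$ off a set contained in $\mathopen]a-\delta,a]$, and the corresponding error estimate $F(a)-F(a-\delta)$ need not vanish as $\delta \downarrow 0$ if $F$ has a jump at $a$. This is why the approximation must be taken from the right, matching the right-continuity of $F$; there is no other real obstacle.
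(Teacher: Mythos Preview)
Your proof is correct and essentially identical to the paper's own argument: with $b:=a+\delta$, your $\phi_\delta$ is exactly the paper's $\phi=\alpha\phi_b-\alpha\phi_a$ with $\alpha=1/(b-a)$, and both proofs bound the $L^1(dF)$ error by $F(b)-F(a)$ and invoke right-continuity of $F$. Your closing remark on why the approximation must be taken from the right is a nice addition not made explicit in the paper.
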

\begin{proof}
  Since $F$ is right-continuous, there exists $b>a$ such that
  $F(b)-F(a)<\varepsilon$.  Set $\phi_a(x):=(a-x)^+$,
  $\phi_b(x):=(b-x)^+$, $\alpha = 1/(b-a)$, and $\phi := \alpha\phi_b
  - \alpha\phi_a$. Then easy computations show that $\phi \colon
  \erre_+ \to [0,1]$ is a continuous function with support $[0,b]$,
  equal to one on $[0,a]$. More precisely,
  \[
  \phi(x) =
  \begin{cases}
    \alpha(b-a)=1, & 0 \leq x \leq a,\\
    \alpha b - \alpha x, & a \leq x \leq b,\\
    0, & x \geq b.
  \end{cases}
  \]
  Since $\phi = 1_{[0,a]} + \phi 1_{\mathopen]a,b]}$, we have
  \[
  \abs[\big]{\phi - 1_{[0,a]}} = \phi 1_{\mathopen]a,b]}
  \leq 1_{\mathopen]a,b]},
  \]
  hence
  \[
  \norm[\big]{\phi-1_{[0,a]}}_{L^1(dF)}
  \leq \int_{[0,b]} 1_{\mathopen]a,b]} \,dF = F(b)-F(a) < \varepsilon.
  \qedhere
  \]
\end{proof}

This shows that we can explicitly approximate $F$ by $P$. The (proof
of the) lemma also shows that $D^+P=F$: for any $a>0$, take a sequence
$(b_n)$ converging to $a$ from the right, and call $\phi_n$ the
corresponding approximating sequence converging to $1_{[0,a]}$ in
$L^1(dF)$, for which
\[
  \int \phi_n\,dF = \int \frac{1}{b_n-a} \bigl( (b_n-x)^+ - (a-x)^+
  \bigr)\,dF(x) = \frac{P(b_n)-P(a)}{b_n-a},
\]
hence
\[
  F(a) = \lim_{n \to \infty} \int \phi_n\,dF
  = \lim_{n \to \infty} \frac{P(b_n)-P(a)}{b_n-a} = D^+P(a).
\]
This approach to proving that $D^+P=F$ (that, by the way, does not
require any further condition on $F$) is probably the most
elementary. Note that the approach via integration by parts of the
previous subsection also implies
\[
F(a) = D^+P(a) = \lim_{n \to \infty} \frac{P(b_n)-P(a)}{b_n-a}
= \lim_{n \to \infty} \int \phi_n\,dF,
\]
while here we prove the seemingly more precise limiting relation
$\phi_n \to 1_{[0,a]}$ in $L^1(dF)$. This, however, can be deduced
from F.~Riesz's lemma:\footnote{This result is often called
  Scheff\'e's lemma: in a general measure space with measure $\mu$, if
  $f_n \to f$ $\mu$-a.e. and $\int \abs{f_n} \,d\mu \to \int
  \abs{f}\,d\mu$, then $f_n \to f$ in $L^1(\mu)$.} since both
$1_{[0,a]}$ and $\phi_n$ are positive, $\phi_n \to 1_{[0,a]}$ a.e. and
$\int \phi_n \,dF \to \int 1_{[0,a]}\,dF$, it follows that $\phi_n \to
1_{[0,a]}$ in $L^1(dF)$. Therefore also the integration-by-parts proof
of $D^+P=F$, together with F.~Riesz's lemma, implies that indicator
functions of intervals can be obtained as limits in the $L^1(dF)$ norm
of linear combinations of put payoff profiles, that are explicitly
determined.

\medskip

Even though the previous lemma is enough to obtain $F$ from $P$, a
more general denseness result holds.
\begin{prop}
  The vector space $V$ generated by put payoff profiles is dense in
  $L^1(dF)$.
\end{prop}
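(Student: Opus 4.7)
The plan is to show that the closure $\overline{V}$ of $V$ in $L^1(dF)$ contains $1_A$ for every Borel subset $A$ of $\erre_+$, and then to conclude by linearity together with the density of simple functions in $L^1(dF)$.

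First I would record the consequences of the previous lemma. It gives $1_{[0,a]} \in \overline{V}$ for every $a \geq 0$. Since $dF$ is a finite measure, dominated convergence then yields $1_{[0,n]} \to 1_{\erre_+}$ in $L^1(dF)$, so $1_{\erre_+}$ belongs to $\overline{V}$ as well.

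The core of the argument is a Dynkin-class ($\pi$-$\lambda$) argument applied to
\[
\mathcal{L} := \bigl\{ A \in \cB(\erre_+) : 1_A \in \overline{V} \bigr\}.
\]
Because $\overline{V}$ is a closed vector subspace of $L^1(dF)$, it is routine to verify that $\mathcal{L}$ is a Dynkin class: it contains $\erre_+$, closure under proper differences follows from $1_{B\setminus A}=1_B - 1_A \in \overline{V}$ when $A\subseteq B$ both lie in $\mathcal{L}$, and closure under countable increasing unions follows from dominated convergence using the constant dominating function $1$, which is in $L^1(dF)$ precisely because $dF$ is finite. The family $\{[0,a] : a \geq 0\}$ is a $\pi$-system contained in $\mathcal{L}$ which generates $\cB(\erre_+)$, hence Dynkin's theorem gives $\mathcal{L} = \cB(\erre_+)$.

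Once every indicator $1_A$ lies in $\overline{V}$, linearity puts every simple function in $\overline{V}$, and the standard density of simple functions in $L^1(dF)$ closes the argument. The only step that uses anything beyond linear-algebraic bookkeeping is the countable-increasing-union clause for $\mathcal{L}$, which depends essentially on the finiteness of $dF$; beyond that, the proof is a straightforward application of the $\pi$-$\lambda$ theorem, so I do not anticipate a serious obstacle.
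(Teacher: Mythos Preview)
Your argument is correct, but it follows a genuinely different route from the paper's. The paper proceeds directly: given $g \in L^1(dF)$, it first approximates $g$ by a finite combination $\sum_i c_i 1_{\mathopen]a_i,b_i]}$ of indicators of bounded half-open intervals, and then uses the lemma (together with $1_{\mathopen]a,b]} = 1_{[0,b]} - 1_{[0,a]}$) to approximate each such indicator by an element of $V$. No $\pi$--$\lambda$ machinery is invoked, and the argument never needs to place $1_{\erre_+}$, or indeed any indicator of a set of possibly infinite measure, into $\overline{V}$.

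The trade-off is this: your Dynkin approach is tidy and shows in one stroke that \emph{every} Borel indicator lies in $\overline{V}$, but it relies essentially on $dF$ being finite (both to get $1_{\erre_+} \in \overline{V}$ and for the increasing-union clause), as you correctly flag. The paper's hands-on approach is more elementary and, because it only ever handles indicators of bounded intervals, goes through for any $F$ that is merely finite on compacts; this fits the paper's stated interest in keeping results valid ``for general $F$, irrespective of the underlying financial interpretation''. A small cosmetic point: the lemma is stated for $a>0$, so strictly speaking your $\pi$-system should be $\{[0,a]:a>0\}$ (which still generates $\cB(\erre_+)$), or you should note that $1_{\{0\}} \in \overline{V}$ follows from $1_{[0,a]} \to 1_{\{0\}}$ in $L^1(dF)$ as $a \downarrow 0$.
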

\begin{proof}
  Let $g \in L^1(dF)$ and $\varepsilon>0$. Then there exists $n \in
  \enne$ and $A_i:=\mathopen]a_i,b_i]$, $0 \leq a_i\leq b_i$, and $c_i
  \in \erre$, $i=1,\ldots,n$, such that
  \[
  \norm[\Big]{g - \sum_{i=1}^n c_i1_{A_i}}_{L^1(dF)} \leq
  \varepsilon/2.
  \]
  By the previous lemma, the indicator function of any interval of
  $\erre_+$ open to the left and closed to the right can be
  approximated by an element of $V$. Therefore, for every
  $i=1,\ldots,n$ there exists $\phi_i \in V$ such that (all norms
  until the end of the proof are meant to be in $L^1(dF)$)
  \[
  \norm{\phi_i - 1_{A_i}} \leq \frac{1}{n\abs{c_i}}
  \frac{\varepsilon}{2},
  \]
  hence, setting $\phi:=\sum c_i\phi_i$,
  \begin{align*}
  \norm{g-\phi}
  &\leq \norm[\Big]{g - \sum_{i=1}^n c_i1_{A_i}}
    + \sum_{i=1}^n \abs{c_i} \norm[\big]{ 1_{A_i} - \phi_i}\\
  &\leq \varepsilon/2 + \sum_{i=1}^n \abs{c_i} \frac{1}{n\abs{c_i}}
  \frac{\varepsilon}{2} = \varepsilon.
  \end{align*}
  Since $\phi$ clearly belongs to $V$, the proof is completed.
\end{proof}

\ifbozza\newpage\else\fi
\section{Convex payoffs}
\label{sec:conv}
We are going to show that prices of call options for all strikes
determine the price of any option with arbitrary convex payoff
function (the result is not new -- see, e.g.,
\cite[pp.~51-52]{KS:MMF}, with a different proof), thus also for
options with payoff function that can be written as the difference of
two convex functions.

Using the language of \S\ref{sec:pf}, let $M_1=(g,dF(g))_{g \in G}$ be
the measurement set with $G$ the set of convex functions on $\erre_+$
(satisfying the assumption below), and
$M_2=(g_k,dF(g_k))_{k\in\erre_+}$, $g_k \colon x \mapsto (x-k)^+$, the
measurement set of call options (for all strikes).  It is evident that
$M_1$ is finer than $M_2$. We shall show that $M_2^m$ is finer than
$M_1$, hence that $M_1$ and $M_2^m$ are equivalent (in particular,
$M_1$ is a representation). The proof will actually establish that,
for any $g \in G$, $\pi(g)$ can be written in terms of an integral of
the function $C \colon k \mapsto \pi(g_k)$. This will then be shown to
belong to $M_2^m$.

\medskip

Throughout this section we assume that $g \colon \erre_+ \to \erre$ is
the restriction to $\erre_+$ of a convex function $h$ on an open set
$I \supset \erre_+$. In particular, $D^+g(0)>-\infty$. In order to
avoid trivialities, we also assume that $g \in L^1(dF)$. We recall
that $g$ is continuous, differentiable almost everywhere,
right-differentiable on $\mathopen[0,\infty\mathclose[$, and that
$D^+g$ is increasing and c\`adl\`ag. In particular, $D^+g$ has finite
variation, thus generates a Lebesgue-Stieltjes measure that we shall
denote by $m$, or also by $dg'$. The positive measure $m$ can also be
identified with the second derivative of $g$ in the sense of
distributions.
\begin{prop}
  \label{prop:cvp}
  Assume that $\overline{dF}<\infty$ and let
  $C \colon \erre_+ \to \erre_+$ be the call option price
  function. Then
  \begin{equation}
    \label{eq:cvp}
    \int_{\erre_+} g\,dF = g(0)F(\infty)
    + D^+g(0) \overline{dF} + \int_{\mathopen]0,\infty\mathclose[} C\,dm.
  \end{equation}
\end{prop}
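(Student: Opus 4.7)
\medskip
\noindent\textbf{Proof plan.}
The plan is to reduce the statement to a double-integral identity by writing $g$ as an explicit ``second-order Taylor expansion'' against its second derivative measure $m$, and then invoking Tonelli. More precisely, since $D^+g$ is increasing and right-continuous with $D^+g(x)-D^+g(0)=m(\mathopen]0,x])$, the absolute continuity of $g$ on $\erre_+$ and a single application of Tonelli give
\[
g(x)-g(0)=\int_0^x D^+g(y)\,dy
=D^+g(0)\,x+\int_0^x m(\mathopen]0,y])\,dy
=D^+g(0)\,x+\int_{\mathopen]0,x]}(x-y)\,dm(y)
\]
for every $x\in\erre_+$. (Alternatively, the same identity falls out of the integration-by-parts formula \eqref{eq:ibp} applied to $F\mapsto x-y$ and $G\mapsto D^+g$ on $\mathopen]0,x]$.)

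Integrating this pointwise identity against $dF$ and separating the three terms, one gets
\[
\int_{\erre_+} g\,dF
=g(0)F(\infty)+D^+g(0)\overline{dF}
+\int_{\erre_+}\!\!\int_{\mathopen]0,x]}(x-y)\,dm(y)\,dF(x).
\]
The first two summands are finite by the assumptions $\overline{dF}<\infty$ and $D^+g(0)\in\erre$, and the third is a nonnegative iterated integral (since $x\geq y$ on the domain of integration and $m$, $dF$ are positive measures). Tonelli's theorem therefore applies unconditionally and allows the interchange
\[
\int_{\erre_+}\!\!\int_{\mathopen]0,x]}(x-y)\,dm(y)\,dF(x)
=\int_{\mathopen]0,\infty\mathclose[}\!\!\int_{[y,\infty\mathclose[}(x-y)\,dF(x)\,dm(y).
\]

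Finally, for each $y>0$ the inner integral is
\[
\int_{[y,\infty\mathclose[}(x-y)\,dF(x)=\int_{\erre_+}(x-y)^+\,dF(x)=C(y),
\]
so the right-hand side equals $\int_{\mathopen]0,\infty\mathclose[}C\,dm$, and collecting the three contributions yields \eqref{eq:cvp}. The only delicate point is making sure that the representation of $g$ through $m$ is valid up to the endpoint $0$ (so that $g(0)$ and $D^+g(0)$ really appear as written); this is guaranteed by the hypothesis that $g$ extends to a convex function on an open set containing $\erre_+$, which makes $g$ continuous at $0$ and $D^+g(0)$ finite. All integrability requirements are covered by $g\in L^1(dF)$ together with $\overline{dF}<\infty$.
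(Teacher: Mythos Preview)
Your proof is correct and follows essentially the same route as the paper's (first) proof: both establish the pointwise representation $g(x)=g(0)+D^+g(0)x+\int_{\mathopen]0,\infty\mathclose[}(x-k)^+\,dm(k)$ via Tonelli, then integrate against $dF$ and apply Tonelli once more to recognise $C$. The only cosmetic difference is that the paper writes the inner integral over $\mathopen]0,\infty\mathclose[$ with the integrand $(x-k)^+$, whereas you restrict to $\mathopen]0,x]$ with integrand $(x-y)$; these are of course the same expression.
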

\begin{proof}
  We have
  \[
  g(x) = g(0) + \int_0^x D^+g(y)\,dy, 
  \]
  where $D^+g(y) - D^+g(0) = m(\mathopen]0,y\mathclose])$ for every $y>0$, 
  hence, by Tonelli's theorem,
  \begin{align*}
    g(x) &= g(0) + D^+g(0)x + \int_0^x \int_{\mathopen]0,y]} dm(k)\,dy\\
         &= g(0) + D^+g(0)x
    + \int_{]0,\infty[} \int_{[k,x]} dy \,dm(k)\\
    &= g(0) + D^+g(0)x + \int_{]0,\infty[} (x-k)^+\,dm(k).
  \end{align*}
  Integrating both sides with respect to $dF$ and appealing again to
  Tonelli's theorem completes the proof.
\end{proof}

\noindent Note that
\[
  \int_{[0,\infty\mathclose[} C\,dm = C(0)m(\{0\}) +
  \int_{\mathopen]0,\infty\mathclose[} C\,dm
  = D^+g(0) \overline{dF} + \int_{\mathopen]0,\infty\mathclose[} C\,dm,
\]
i.e. \eqref{eq:cvp} could be written in the more symmetric form
\[
    \int_{\erre_+} g\,dF = g(0)F(\infty)
    + \int_{\erre_+} C\,dm.
\]
Analogously, since
\[
\int_{[0,\infty\mathclose[} g\,dF = g(0)F(0)
+ \int_{\mathopen]0,\infty\mathclose[} g\,dF,
\]
\eqref{eq:cvp} could also be written as
\[
\int_{\mathopen]0,\infty\mathclose[} g\,dF = g(0) (F(\infty)-F(0))
+ D^+g(0) \overline{dF} + \int_{\mathopen]0,\infty\mathclose[} C\,dm.
\]

\begin{coroll}
  Let $I \subseteq \erre$ be an open set containing $\erre_+$ and
  $h_1$, $h_2 \colon I \to \erre_+$ convex functions belonging to
  $L^1(dF)$. If $g=h_1-h_2$ and $\nu$ is the Lebesgue-Stieltjes
  (signed) measure induced by $D^+h_1-D^+h_2$,
  i.e. $\nu([0,x]):=D^+h_1(x)-D^+h_2(x)$, then
  \[
    \int_{\erre_+} g\,dF = g(0)F(\infty)
    + D^+g(0) \overline{dF} + \int_{\mathopen]0,\infty\mathclose[} C\,d\nu.
  \]
\end{coroll}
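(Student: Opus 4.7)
The plan is to reduce the corollary to Proposition \ref{prop:cvp} by linearity. Since the hypotheses apply separately to $h_1$ and $h_2$ (each is convex on an open set containing $\erre_+$ and lies in $L^1(dF)$), I would first invoke Proposition \ref{prop:cvp} for each, obtaining
\[
\int_{\erre_+} h_i\,dF = h_i(0) F(\infty) + D^+h_i(0)\,\overline{dF}
+ \int_{\mathopen]0,\infty\mathclose[} C\,dm_i, \qquad i=1,2,
\]
where $m_i$ is the Lebesgue-Stieltjes measure generated by the right-continuous increasing function $D^+h_i$.

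Next, I would subtract these two identities. The left-hand side becomes $\int_{\erre_+}(h_1-h_2)\,dF = \int_{\erre_+} g\,dF$ by linearity of the integral, and the pointwise terms collapse to $g(0)F(\infty) + D^+g(0)\,\overline{dF}$ (using that $D^+$ is linear when applied to a difference of convex functions, which is immediate from its definition as a one-sided limit). The core computational point is then to check that
\[
\int_{\mathopen]0,\infty\mathclose[} C\,dm_1 - \int_{\mathopen]0,\infty\mathclose[} C\,dm_2
= \int_{\mathopen]0,\infty\mathclose[} C\,d\nu,
\]
where $\nu = m_1 - m_2$ is the signed Lebesgue-Stieltjes measure whose distribution function is $D^+h_1-D^+h_2$; this is exactly the definition of the integral of $C$ against the signed measure $\nu$ via its Jordan decomposition, and it is well-defined because the finiteness of $\int C\,dm_i$ for $i=1,2$ (guaranteed by $h_i\in L^1(dF)$ through Proposition \ref{prop:cvp}) gives integrability of $C$ against the total variation measure $\abs{\nu}\leq m_1+m_2$.

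The only potential obstacle is making sure that $\nu$ as defined in the statement really coincides with $m_1-m_2$ as Lebesgue-Stieltjes measures; this follows because both are signed measures on $\erre_+$ whose value on half-open intervals $\mathopen]a,b]$ is $(D^+h_1(b)-D^+h_2(b)) - (D^+h_1(a)-D^+h_2(a))$, and such values determine a unique Borel signed measure. With this identification in hand, combining the two displays yields exactly the claimed formula, completing the proof.
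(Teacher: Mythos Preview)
Your proposal is correct and matches the paper's approach exactly: the paper states the corollary without proof, treating it as an immediate consequence of Proposition~\ref{prop:cvp} applied separately to $h_1$ and $h_2$ and then subtracted, which is precisely what you do. Your additional remarks on why $\nu=m_1-m_2$ as signed measures and why $C$ is integrable against $\abs{\nu}$ are sound and fill in the only details one might want spelled out.
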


\noindent Slightly more generally, one can also write
\[
\beta_T^{-1} g(S_T) = g(0)\beta_T^{-1} 
+ D^+g(0)\beta_T^{-1}S_T + \int_{]0,\infty[} \beta_T^{-1}(S_T-k)^+\,d\nu(k),
\]
hence, taking conditional expectation with respect to $\cF_t$, for any
$t \in [0,T]$, and multiplying by $\beta_t$,
\begin{align*}
\beta_t \E_\Q\bigl[\beta_T^{-1} g(S_T)\big\vert\cF_t\bigr]
&= g(0) \beta_t \E_\Q\bigl[\beta_T^{-1} \big\vert\cF_t\bigr]\\
&\quad + D^+g(0) \beta_t \E_\Q\bigl[\beta_T^{-1} S_T \big\vert\cF_t\bigr]\\
&\quad + \int_{]0,\infty[} \beta_t \E_\Q\bigl[\beta_T^{-1} (S_T-k)^+ \big\vert
\cF_t\bigr]\,d\nu(k)\\
&= g(0) B(t,T) + D^+g(0) S_t + \int_{]0,\infty[} C_t(k)\,d\nu(k),
\end{align*}
where
$C_t(k) := \beta_t \E_\Q\bigl[\beta_T^{-1} (S_T-k)^+ \big\vert
\cF_t\bigr]$ is the price at time $t$ of the call option with strike
$k$.

It is actually possible to prove Proposition~\ref{prop:cvp} using only
the integration by parts formula \eqref{eq:ibp}. Even though the proof
is longer than the previous one, some of its ingredients may be
interesting in their own right.
We begin with a useful reduction step.
\begin{lemma}
  \label{lm:red}
  Assume that $\overline{dF}<\infty$. The claim of
  Proposition~\ref{prop:cvp} holds if and only if it does under the
  additional assumptions that $g(0)=D^+g(0)=0$ and $m$ has compact
  support.
\end{lemma}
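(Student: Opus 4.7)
The ``only if'' direction is trivial, so all the work lies in the ``if'' direction: assuming \eqref{eq:cvp} has been established whenever $g(0)=D^+g(0)=0$ and $m$ has compact support, one must deduce it for arbitrary convex $g\in L^1(dF)$ under $\overline{dF}<\infty$. My plan is two successive reductions.

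First I would remove the affine part of $g$. Set $\tilde g(x):=g(x)-g(0)-D^+g(0)\,x$, which is convex with $\tilde g(0)=0$ and $D^+\tilde g(0)=0$; since subtracting a constant from $D^+g$ does not alter its Lebesgue--Stieltjes measure, one has $dD^+\tilde g=m$. Moreover $\tilde g\in L^1(dF)$ because $dF$ is a finite measure with finite mean $\overline{dF}$. Integrating $g=\tilde g+g(0)+D^+g(0)\,x$ against $dF$ shows that \eqref{eq:cvp} for $g$ is equivalent to the identity
\[
\int_{\erre_+}\tilde g\,dF=\int_{\mathopen]0,\infty\mathclose[}C\,dm,
\]
which is precisely \eqref{eq:cvp} applied to $\tilde g$. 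Hence it suffices to prove the formula when $g(0)=D^+g(0)=0$.

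Second I would truncate the measure $m$. Under $g(0)=D^+g(0)=0$, the fundamental theorem of calculus for convex functions together with Tonelli's theorem (used here only as a preliminary representation, not invoking \eqref{eq:cvp} itself) gives
\[
g(x)=\int_{\mathopen]0,\infty\mathclose[}(x-k)^+\,dm(k),
\]
so in particular $g\geq 0$. For each $n\in\enne$ define
\[
g_n(x):=\int_{\mathopen]0,n\mathclose]}(x-k)^+\,dm(k);
\]
then $g_n$ is convex with $g_n(0)=D^+g_n(0)=0$, and the measure induced by $D^+g_n$ is $m_n:=m\big\vert_{\mathopen]0,n\mathclose]}$, which has compact support in $\erre_+$. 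Since $0\leq g_n\leq g\in L^1(dF)$, the standing hypothesis applied to $g_n$ yields
\[
\int_{\erre_+}g_n\,dF=\int_{\mathopen]0,n\mathclose]}C\,dm.
\]
Monotone convergence on the left (via $g_n\uparrow g$ in $dF$) and on the right (via $C\geq 0$ in $dm$) then produces \eqref{eq:cvp} for $g$.

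The only real obstacle is bookkeeping: one must verify that $\tilde g$ and each $g_n$ induce the correct second-derivative measure and belong to $L^1(dF)$, and that the nonnegativity of all the integrands legitimates the final monotone passage. Both points follow from $\overline{dF}<\infty$ and from the sandwich $0\leq g_n\leq g$, so no deeper tool than the monotone convergence theorem is needed.
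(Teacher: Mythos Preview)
Your proof is correct and follows essentially the same two-step reduction as the paper: first strip off the affine part $g(0)+D^+g(0)x$, then truncate the measure $m$ and pass to the limit by monotone convergence. The only cosmetic differences are that the paper uses smooth cutoffs $\chi_n$ (supported on $[0,n+1]$, equal to $1$ on $[0,n]$) rather than your hard restriction $m\vert_{\mathopen]0,n]}$, and that it writes $g_n$ as the iterated integral $\int_0^x m_n([0,y])\,dy$ rather than via the representation $\int_{\mathopen]0,n]}(x-k)^+\,dm(k)$---but these are the same object, and neither choice affects the argument.
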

\begin{proof}
  Clearly only sufficiency needs a proof. The extra assumption
  $g(0)=D^+g(0)=0$ comes at no loss of generality as one can reduce to
  this situation simply replacing the function $g$ by the function $x
  \mapsto g(x) - g(0) - D^+g(0)x$, which is still convex, being the
  sum of a convex function and an affine function, as well as in
  $L^1(dF)$, because $\overline{dF}$ is finite by assumption. Let us
  then assume that $g(0)=D^+g(0)=0$.
  Let $(\chi_n)$ be a sequence of smooth cutoff functions such that
  $\chi_n \colon \erre_+ \to [0,1]$ has support equal to $[0,n+1]$ and
  is equal to one on $[0,n]$. Setting, for every $n \in \enne$,
  $m_n:=\chi_n m$ and
  \[
    g^{(1)}_n(x) := m_n([0,x]) = 
    \int_{[0,x]} \chi_n\,dm, \qquad
    g_n(x) := \int_0^x g^{(1)}_n(y)\,dy,
  \]
  it is immediately seen that $g^{(1)}_n$ is positive,
  $g_n'=g_n^{(1)}$ a.e. and $D^+g_n=g_n^{(1)}$, and $g_n$ is convex.
  Therefore, by hypothesis,
  \[
    \int_{\erre_+} g_n\,dF = \int_{\mathopen]0,\infty\mathclose[} C \,dm_n
    = \int_{\mathopen]0,\infty\mathclose[} C \chi_n\,dm.
  \]
  Several applications of the monotone convergence theorem imply that
  $(g_n)$ converges pointwise from below to $g$, hence, finally, that
  \[
  \int_{\erre_+} g\,dF = \int_{\mathopen]0,\infty\mathclose[} C \,dm.
  \qedhere
  \]
\end{proof}

\noindent Note that the ``normalizing'' assumptions $g(0)=0$ and
$D^+g(0)=0$ imply that
\[
\int_{\erre_+} g\,dF = \int_{\mathopen]0,\infty\mathclose[} g\,dF
\]
and that
\[
\int_{\mathopen]0,\infty\mathclose[} C\,dm = \int_{\erre_+} C \,dm,
\]
respectively. The former is evident, and the latter follows from
$m(\{0\})=D^+g(0)=0$. Therefore
\[
\int_0^\infty g\,dF = \int_{\mathopen]0,\infty\mathclose[} g\,dF
= \int_{\mathopen]0,\infty\mathclose[} C\,dm = \int_{\erre_+} C \,dm.
\]

\begin{proof}[An alternative proof of Proposition~\ref{prop:cvp}]
  We shall assume, as the previous lemma allows to do, that
  $g(0)=D^+g(0)=0$ and that $m$ has compact support, which implies
  that, for all $x$ sufficiently large, $g$ is differentiable at $x$
  and $g'(x)$ is constant. For the rest of the proof, we shall write,
  with a harmless abuse of notation, $g'$ to denote $D^+g$.
  Since $g$ is continuous and $F$ is c\`adl\`ag, the integration by
  parts formula \eqref{eq:ibp} yields, for any $a \in \erre_+$,
  \[
  g(a) F(a) - g(0)F(0) = \int_{\mathopen]0,a\mathclose]} g(x)\,dF(x) 
  + \int_{\mathopen]0,a\mathclose]} F(x)\,dg(x).
  \]
  Therefore, as $g(0)=0$ and the Lebesgue-Stieltjes measure $dg$ is absolutely
  continuous with respect to Lebesgue measure with density $g'$,
  \[
  \int_{[0,a]} g(x)\,dF(x) = g(a) F(a) 
  - \int_0^a g'(x) F(x)\,dx,
  \]
  hence
  \[
  \int_0^\infty g(x)\,dF(x) = \lim_{a\to+\infty} \Bigl( g(a) F(a) -
  \int_0^a g'(x) F(x)\,dx \Bigr).
  \]
  Since $g'$ is increasing and c\`adl\`ag, and $C$ is continuous, another
  application of the integration by parts formula \eqref{eq:ibp}
  yields, for any $a \in \erre_+$,
  \[
  g'(a)C(a) - g'(0)C(0) = \int_0^a g'(x)\,dC(x) 
  + \int_{\mathopen]0,a\mathclose]} C(x)\,dg'(x),
  \]
  hence, recalling that $g'(0)=0$,
  \[
  \int_0^a g'(x)\,dC(x) = g'(a)C(a) - \int_{\mathopen]0,a]} C\,dm.
  \]
  Moreover, the identity $C'=F-F(\infty)$ a.e. implies
  \[
    \int_0^a g'(x)\,dC(x) = -F(\infty)g(a)
    + \int_0^a g'(x)F(x)\,dx,
  \]
  hence
  \begin{align*}
    -\int_0^a g'(x)F(x)\,dx
    &= -F(\infty)g(a) - \int_0^a g'(x)\,dC(x)\\
    &= -F(\infty)g(a) - g'(a)C(a) + \int_{\mathopen]0,a]} C\,dm,
  \end{align*}
  thus also
  \[
  \int_{\erre_+} g(x)\,dF(x) = \lim_{a\to+\infty} \Bigl(
  g(a)(F(a)-F(\infty)) - g'(a)C(a) + \int_{\mathopen]0,a]} C\,dm \Bigr).
  \]
  Note that $g'$ is increasing by convexity of $g$ and $g'(0)=0$,
  hence $g'$ is positive, therefore $g$ is increasing and positive
  because $g(0)=0$. Therefore
  \[
  \abs[\big]{g(a) (F(a)-F(\infty))} = g(a) (F(\infty)-F(a)) =
  \int_{\mathopen]a,+\infty\mathclose[} g(a)\,dF 
  \leq \int_{\mathopen]a,+\infty\mathclose[} g(x)\,dF(x),
  \]
  where the last term converges to zero as $a \to +\infty$ because $g
  \in L^1(dF)$ by assumption. In particular,
  \[
  \lim_{a \to +\infty} g(a)(F(a)-F(\infty)) = 0.
  \]
  Moreover, as $g'$ is constant at infinity and $C$ tends to zero at
  infinity, we also have
  \[
  \lim_{a \to +\infty} g'(a) C(a) = 0,
  \]
  which allows to conclude that
  \[
  \int_{\erre_+} g\,dF = \int_{\mathopen]0,\infty\mathclose[} C\,dm.
  \qedhere
  \]
\end{proof}

Let us show that
$\int_{\mathopen]0,\infty\mathclose[} C\,dm \in M_2^m$. By Tonelli's
theorem,
\begin{align*}
  \int_{\mathopen]0,\infty\mathclose[} C\,dm
  &= \int_{\mathopen]0,\infty\mathclose[}
    \int_{\erre_+} (x-k)^+ \,dF(x)\,dm(k)\\
  &= \int_{\erre_+}
    \int_{\mathopen]0,\infty\mathclose[} (x-k)^+ \,dm(k)\,dF(x).
\end{align*}
Let $(k_i)_{i=0,\ldots,2^n}$ be a dyadic partition of
$\mathopen]0,n]$. Then
\[
  \sum_{i=1}^{2^n} (x-k_{i+1})^+
  1_{\mathopen]k_1,k_{i+1}\mathclose]}(k) \uparrow (x-k)^+
  \qquad \forall x, \,  k \in \erre_+
\]
as $n \to \infty$, hence, again by Tonelli's theorem,
\begin{align*}
  &\int_{\mathopen]0,\infty\mathclose[} \sum_{i=1}^{2^n} (x-k_{i+1})^+
    1_{\mathopen]k_1,k_{i+1}\mathclose]}(k)\,dm(k)\\
  &\hspace{3em} = \sum_{i=1}^{2^n} m\bigl(\mathopen]k_i,k_{i+1}\mathclose]\bigr)
    (x-k_{i+1})^+
    \, \uparrow \, \int_{\mathopen]0,\infty\mathclose[} (x-k)^+\,dm(k)
    \qquad \forall x \in \erre_+.
\end{align*}
Then
\[
g_n := \sum_{i=1}^{2^n} m\bigl(\mathopen]k_i,k_{i+1}\mathclose]\bigr)
    (x-k_{i+1})^+
\]
defines a sequence of elements in the vector space generated by $M_2$
that monotonically converges pointwise to the function
$x \mapsto \int_{\mathopen]0,\infty\mathclose[} (x-k)^+\,dm(k)$, which
belongs to $L^1(dF)$ by assumption, therefore also to $M_2^m$.

\begin{rmk}
  It is more convenient to work with the call price function $C$,
  rather than with the put price function $P$, because $C$ vanishes at
  infinity, while $P$ grows linearly at infinity (see
  Propositions~\ref{prop:P} and \ref{prop:C}). However, the identity
  \[
  x-k = (x-k)^+ - (x-k)^- = (x-k)^+ - (k-x)^+
  \]
  yields, upon integrating both sides with respect to $dF$,
  \[
    \int_{\erre_+} x\,dF(x) - k \int_{\erre_+} dF(x)
    = \overline{dF} - kF(\infty) = C(k) - P(k),
  \]
  i.e.
  \begin{equation}
    \label{eq:PC}
    C(k) = P(k) - kF(\infty) + \overline{dF},
  \end{equation}
  hence $k \mapsto P(k) - kF(\infty) + \overline{dF} \in L^1(m)$, even
  though, in general, $P$ need not belong to $L^1(dF)$. A formula
  relating the integral of $g$ with respect to $dF$ with the integral
  of $P$ with respect to $m$ for a special class of functions $g$ will
  be discussed in the next section.
\end{rmk}

\begin{rmk}
  A small variation of the argument used in the proof of
  Lemma~\ref{lm:red} shows that every $C^2$ function $g$ is the
  difference of two convex functions $h_1$ and $h_2$ (taking the
  positive and negative part of $g''$). A simple sufficient condition
  ensuring that the functions $h_1$ and $h_2$ can be chosen in
  $L^1(dF)$ is that there exists a function $h \in L^1(dF)$ with
  $h''=\abs{g''}$.
\end{rmk}

\ifbozza\newpage\else\fi
\section{A distributional approach}
\label{sec:distrib}
We are going to show that most properties of the functions $F$, $P$
and $C$ discussed in the previous sections can also be obtained using
Schwartz's distributions. The main advantage of this approach is that
several results reduce, in the formal aspect, to simple calculus for
distributions. Some work is needed, however, to remove the regularity
assumptions on test functions typical of this approach.

Throughout this section, the functions $F$, $P$, and $C$ (the last one
if $\overline{dF}$ is finite) are assumed to be extended to $\erre$
setting them equal to zero on $\mathopen]-\infty,0\mathclose[$. All of
them are obviously locally in $L^1(\erre)$, hence they can be
considered as distributions in $\cD'(\erre)$. For instance,
\[
\ip{F}{\phi} := \int_\erre F(x)\phi(x)\,dx, \qquad \phi \in \cD(\erre)
\]
(as is customary, we shall use the same symbols to define both a
functions and the corresponding distribution).  Moreover, the measure
$dF$ can be identified with the distributional derivative $F'$ of
$F$. In fact, for any $\phi \in \cD(\erre)$,
\[
\ip{F'}{\phi} = - \ip{F}{\phi'} = - \int_\erre F(x)\phi'(x)\,dx,
\]
where, thanks to the integration-by-parts formula,
\[
- \int_\erre F(x)\phi'(x)\,dx 
= \int_{[0,\infty\mathclose[} \phi(x)\,dF(x).
\]
The price function for put options $P$ can be written in terms of
convolutions of distributions. In fact, denoting the function
$x \mapsto x^+$ by $(\cdot)^+$, the function $P$ is the convolution of
$(\cdot)^+$ with the measure $dF$, therefore, since $dF=F'$ in $\cD'$
and both $F'$ and $(\cdot)^+$, interpreted as elements of $\cD'$,
are supported on $\erre_+$, the convolution of $(\cdot)^+$ and $F'$ is
well-defined in the sense of distributions, and
$P = (\cdot)^+ \ast F'$ in $\cD'$. Let $H=1_{\erre_+}$ denote the
(right-continuous) Heaviside function. Standard calculus in $\cD'$
then yields
\[
P = (\cdot)^+ \ast F' = \bigl((\cdot)^+\bigr)' \ast F = H \ast F,
\]
thus also, denoting the Dirac measure at the origin by $\delta$,
\[
P' = H' \ast F = \delta \ast F = F,
\]
and $P'' = F'$, all as identities in $\cD'$. As already observed, $F'$
coincides with the Lebesgue-Stieltjes measure $dF$, hence it is a
positive distribution. As is well known, a distribution with positive
second derivative is a convex function, hence we recover the convexity
of $P$. This and the identity $P'=F$ in $\cD'$ also imply that $P'=F$
holds also in the a.e. sense in $\erre$, and that one can choose a
right-continuous version of $P'$, so that $D^+P=F$.
The properties of $P$ have thus been obtained starting from the
properties of its second distributional derivative, that is reversing
the path followed in the previous section, where convexity of $P$ was
proved first, which implied first-order differentiability outside a
countable set of points first, hence second-order
differentiability in the sense of measures.

On the other hand, it seems not possible to treat the call option
price function $C$ by similar arguments, because one would formally
have $C=(\cdot)^- \ast F'$, where, however, the convolution is not
well-defined in the sense of distributions. In fact, $(\cdot)^-$ and
$F'$ do not have their support ``on the same side'' of $\erre$, and
none of them has compact support. Nonetheless, properties of $C$ can
be deduced from those of $P$ taking \eqref{eq:PC} into account. Since
we are considering $C$ and $P$ as distributions on $\erre$, it is
convenient to rewrite \eqref{eq:PC} as
\[
C = P - F(\infty)(\cdot)^+ + \overline{dF}H,
\]
which can be interpreted both as an identity of c\`adl\`ag functions
on $\erre$, as well as an identity in $\cD'(\erre)$. Differentiating
in $\cD'(\erre)$ yields
\begin{align}
  C'  &= P' - F(\infty)H + \overline{dF} \delta,\nonumber\\
  C'' &= P'' - F(\infty)\delta + \overline{dF} \delta'.\label{eq:C2}
\end{align}
Since $\delta'$ is not a measure, the function $C$ is \emph{not}
convex on $\erre$ (this also clearly follows from $C(0)=\overline{dF}$
and $C(k)=0$ for all $k<0$). On the other hand, one also infers that
\[
  C' = P' - F(\infty), \quad C'' = P'' \qquad
  \text{ in } \cD'(\mathopen]0,\infty\mathclose[),
\]
hence $C$ is convex on $\mathopen]0,\infty\mathclose[$, then also on
$\erre_+$, and one can choose a right-continuous version of $C'$ on
$\erre_+$ such that $C'(x)=F(x)-F(\infty)$ for a.a. $x \in \erre_+$,
with $D^+C(x) = F(x)-F(\infty)$ for every $x \in \erre_+$.

\medskip

We are now going to show how to prove \eqref{eq:cvp} using
distribution arguments. Note that, assuming that $g(0)=Dg^+(0)=0$,
\eqref{eq:cvp} could heuristically be written as
$\ip{C''}{g}=\ip{C}{g''}$, which seems very natural indeed. It is
clear, however, that it makes no sense if $g$ is just a convex
function in $L^1(dF)$. However, note that the identity has a meaning
if $\ip{\cdot}{\cdot}$ is interpreted as the duality between measures
and continuous functions, rather than between distributions and test
functions.

Let us start from the identity $P''=F'=dF$ in $\cD'(\erre)$ that was
proved above. Then we immediately obtain $\ip{P''}{g} = \ip{P}{g''}$
for every $g \in \cD(\erre)$, hence also, since $P''$ is a
distribution of order at most two,
\[
\ip{P''}{g} = \int_{\erre_+} g\,dF = \ip{P}{g''} 
\qquad \forall g \in C^2_c(\erre).
\]
Therefore, using identity \eqref{eq:C2},
\begin{align*}
  \ip{P''}{g} = \int_{\erre_+} g\,dF
  &= \ip{C''}{g} + F(\infty)\ip{\delta}{g}
    - \overline{dF}\ip{\delta'}{g}\\
  &= \ip{C''}{g} + F(\infty) g(0) + \overline{dF} g'(0).
\end{align*}
We have thus obtained \eqref{eq:cvp} under the assumption $g \in
C^2_c(\erre)$, or, equivalently, $g \in C^2(\erre_+)$ such that $g(x)=0$
for $x$ sufficiently large.

Let us now assume that $g \in C^2(\erre)$.  As discussed above, we can
and shall assume, without loss of generality, that $g(0)=g'(0)=0$.
Let $(\chi_n)$ be a sequence of smooth cutoff function taking values
in $[0,1]$, equal to one on $[-a,a]$, and equal to zero on
$[a+1/n,\infty\mathopen[$. Then $g\chi_n \in C^2_c(\erre)$ and
\[
(g\chi_n)'' = g''\chi_n + 2g'\chi_n' + g\chi_n'',
\]
hence
\begin{align*}
\int_{\erre_+} g\chi_n \,dF = \ip{C''}{g\chi_n} 
&= \ip{C}{(g\chi_n)''}\\
&= \ip{C}{g''\chi_n} + 2\ip{C}{g'\chi_n'} + \ip{C}{g\chi_n''}.
\end{align*}
We are going to pass to the limit as $n \to \infty$.
One has
\[
\int_{\erre_+} g\chi_n \,dF = \int_{[0,a]} g\,dF
+ \int g\chi_n 1_{\mathopen]a,a+1/n\mathopen[} \,dF,
\]
where $g\chi_n 1_{\mathopen]a,a+1/n\mathopen[} \to 0$ pointwise,
hence, by the dominated convergence theorem,
\[
\lim_{n \to \infty} \int_{\erre_+} g\chi_n \,dF = \int_{[0,a]} g \,dF.
\]
An entirely similar, slightly simpler reasoning shows that
\[
\lim_{n \to \infty} \ip{C}{g''\chi_n} 
= \lim_{n \to \infty} \int_{\erre_+} C g'' \chi_n
= \int_0^a C g''.
\]
Moreover,
\[
\ip{C}{g'\chi_n'} = \int_{\erre} C(x)g'(x)\chi'_n(x)\,dx
= \int_a^{a+1/n} C(x)g'(x)\chi'_n(x)\,dx,
\]
where $-\chi'_n$ converges to $\delta_a + R$ in the sense of
distributions, where $\delta_a$ is the Dirac measure at $a$ and $R$ a
distribution with support contained in
$\mathopen]-\infty,-a]$, hence
\[
\lim_{n \to \infty} \ip{C}{g'\chi_n'} = -C(a)g'(a).
\]
The term $\ip{C}{g\chi_n''}$ is more difficult to treat because
$\chi_n''$ converges to $-\delta_a'$ in the sense of distributions
(modulo terms with support in the strictly negative reals, that we are
going to ignore), but $C$ is just right-differentiable, not of class
$C^1$. We can nonetheless argue as follows: let $(\rho_m)$ be a
sequence of mollifiers with support contained in $[-1/m,0]$ and set
$C_m := C \ast \rho_m$. Then $C_m \in C^\infty(\erre)$ and
\[
\ip{C_m}{g\chi_n''} = \ip{C_mg}{\chi_n''} = - \ip{(C_mg)'}{\chi_n'},
\]
hence
\[
\lim_{n \to \infty} \ip{C_m}{g\chi_n''} = C_m'(a)g(a) + C_m(a)g'(a).
\]
Thus one has
\[
\int_0^a C_m''g = \int_0^a C_mg'' + C'_m(a)g(a) - C_m(a)g'(a).
\]
We can now pass to the limit as $m \to \infty$: the continuity of $C$
implies that $C_m$ converges to $C$ uniformly on $[0,a]$, hence
\[
\lim_{m \to \infty} \int_0^a C_mg'' = \int_0^a Cg'', \qquad
\lim_{m \to \infty} C_m(a)=C(a).
\]
Setting $dF_m := C''_m = dF \ast \rho_m$ and
$\widetilde{\rho}_m \colon x \mapsto \rho_m(-x)$, so that the support
of $\widetilde{\rho}_m$ is contained in $[0,1/m]$, one has
\[
\int_0^a gC''_m = \int g1_{[0,a]}\,dF_m
= \int g1_{[0,a]} \ast \widetilde{\rho}_m\,dF,
\]
where
\begin{align*}
  \lim_{n \to \infty} g1_{[0,a]} \ast \widetilde{\rho}_m(x)
  &= g(x) \qquad \forall x \in \mathopen]0,a\mathclose[,\\
  \lim_{n \to \infty} g1_{[0,a]} \ast \widetilde{\rho}_m(0)
  &= 0,\\
  \lim_{n \to \infty} g1_{[0,a]} \ast \widetilde{\rho}_m(a)
  &= g(a-) = g(a),
\end{align*}
i.e.
\[
  \lim_{n \to \infty} g1_{[0,a]} \ast \widetilde{\rho}_n(x)
  = g1_{\mathopen]0,a]}(x) \qquad \forall x \in \erre,
\]
or, in other words, $g1_{[0,a]} \ast \widetilde{\rho}_n$ converges to
the c\`agl\`ad version of $g1_{[0,a]}$. Therefore, by the dominated
convergence theorem,
\begin{align*}
  \lim_{m \to \infty} \int_0^a gC''_m
  = \lim_{m \to \infty} \int g1_{[0,a]} \ast \widetilde{\rho}_m\,dF
  = \int_{\mathopen]0,a\mathclose]} g\,dF
  = \int_{[0,a]} g\,dF,
\end{align*}
where the last equality follows from $g(0)=0$.

Since $C_m \in C^\infty(\erre)$ and $C$ is right-differentiable with
increasing incremental quotients (because it is convex), the dominated
convergence theorem yields
\begin{align*}
  C'_m(a) = D^+C_m(a)
  &= \lim_{h \to 0+} \frac{C_m(a+h)-C_m(a)}{h}\\
  &= \lim_{h \to 0+} \int_\erre \frac{C(a-y+h)-C(a-y)}{h} \rho_m(y)\,dy\\
  &= \int_\erre D^+C(a-y) \rho_m(y)\,dy,
\end{align*}
hence also, recalling that the support of $\rho_m$ is contained in
$\erre_-$ and that $D^+C$ is right-continuous,
\[
  \lim_{m \to \infty} C'_m(a) - D^+C(a) =
  \lim_{m \to \infty}  \int_\erre \bigl(D^+C(a-y) - D^+C(a)\bigr) 
  \rho_m(y)\,dy  = 0.
\]
We have thus shown that
\[
\int_{[0,a]} g\,dF = \int_0^a Cg'' - C(a)g'(a) + D^+C(a)g(a)
\]
for every $g \in C^2(\erre)$. To remove the assumption that
$g \in C^2$, assuming instead that it is convex, we can apply the same
regularization by convolution: let $g$ be convex and set
$g_n := g \ast \rho_n$, with the sequence of mollifiers $(\rho_n)$
chosen as before. Then $g_n \in C^\infty$ and
\[
\int_{[0,a]} g_n \,dF = \int_0^a C g_n'' - C(a)g_n'(a) + D^+C(a)g_n(a),
\]
where $g_n \to g$ uniformly on $[0,a]$ and
$\lim_{n \to \infty} g'_n(a) = D^+g(a)$. Moreover, using the same
argument as before,
\[
  \lim_{n \to \infty} \int_0^a C g_n''
  = \lim_{n \to \infty} \int C1_{[0,a]} \ast \widetilde{\rho}_n\,dm
  = \int_{\mathopen]0,a]} C\,dm.
\]
We conclude that
\begin{equation}
  \label{eq:sudo}
  \int_{[0,a]} g\,dF = \int_{\mathopen]0,a]} g\,dF
  = \int_{\mathopen]0,a]} C\,dm - C(a)D^+g(a) + D^+C(a)g(a).
\end{equation}
Note that until here we have not used the assumption that $g \in
L^1(dF)$. To complete the proof of \eqref{eq:cvp}, we let $a$ tend to
infinity using two lemmas proved next, according to which the last two
terms on the right-hand side of \eqref{eq:sudo} tend to zero. It is
precisely at this point that we use the assumption that $g \in
L^1(dF)$.

\begin{lemma}
  Assume that $dF$ is a finite measure and let $g \in L^1(dF)$ be
  increasing. Then
  \[
    \lim_{a \to \infty} g(a)(F(\infty)-F(a)) = 0.
  \]
  In particular, if $\overline{dF}<\infty$ then
  $\lim_{a \to \infty} D^+C(a)g(a)=0$.
\end{lemma}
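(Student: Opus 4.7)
The plan is to sandwich the quantity $g(a)(F(\infty)-F(a))$ between two expressions that both tend to zero, using the monotonicity of $g$ to obtain the matching bounds from above and below. The upper bound will exploit the integrability of $g$, while the lower bound will use only that $dF$ is finite. The main subtlety is that $g$ need not have a definite sign, so a one-sided estimate is not sufficient; the two-sided squeeze bypasses the need for a case analysis on the sign of $g(a)$.

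For the upper bound, I would rewrite $F(\infty)-F(a)=\int_{\mathopen]a,\infty\mathclose[}dF(x)$, and use that $g$ is increasing to pull the constant $g(a)$ inside the integral:
\[
g(a)(F(\infty)-F(a)) = \int_{\mathopen]a,\infty\mathclose[} g(a)\,dF(x)
\leq \int_{\mathopen]a,\infty\mathclose[} g(x)\,dF(x).
\]
Since $g \in L^1(dF)$, the dominated convergence theorem applied to $g\cdot 1_{\mathopen]a,\infty\mathclose[}$, with dominating function $\abs{g}$, shows the right-hand side tends to $0$ as $a \to \infty$.

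For the lower bound, the same monotonicity gives $g(a) \geq g(0)$, and since $F(\infty)-F(a) \geq 0$,
\[
g(a)(F(\infty)-F(a)) \geq g(0)(F(\infty)-F(a)),
\]
and the right-hand side tends to $0$ because $dF$ is finite, so that $F(\infty)-F(a) \to 0$. Combining the two bounds by the squeeze theorem yields the first claim. For the ``in particular'' part, recall from the theorem in \S\ref{sec:PCF} that $\overline{dF}<\infty$ implies $D^+C(a)=F(a)-F(\infty)$, hence $D^+C(a)\,g(a) = -g(a)(F(\infty)-F(a))$, which tends to zero by what has just been proved.
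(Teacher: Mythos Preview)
Your proof is correct and uses essentially the same key estimate as the paper, namely the monotonicity bound $g(a)(F(\infty)-F(a)) \leq \int_{\mathopen]a,\infty\mathclose[} g\,dF$. The only difference is in the handling of the sign of $g$: the paper shifts $g$ by $\abs{g(0)}$ to reduce to the positive case, whereas your direct lower bound $g(a)(F(\infty)-F(a)) \geq g(0)(F(\infty)-F(a))$ and squeeze is a slightly cleaner way to achieve the same end.
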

\begin{proof}
  Assume first that $g(0)=0$, so that $g$ is positive. Then, as $g$ is
  increasing,
  \[
    g(a)(F(\infty)-F(a)) = \int_{\mathopen]a,\infty\mathclose[} g(a) dF(x)
    \leq \int_{\mathopen]a,\infty\mathclose[} g(x) dF(x),
  \]
  and
  \[
  \lim_{a \to \infty} \int_{\mathopen]a,\infty\mathclose[} g(x) dF(x) = 0
  \]
  because $g \in L^1(dF)$.  If $g(0)<0$, then consider the function
  $\tilde{g}:=\abs{g(0)}+g$, which is increasing and belongs to
  $L^1(dF)$. The identity
  \[
    g(a)(F(\infty)-F(a))
    = \tilde{g}(a)(F(\infty)-F(a)) - \abs{g(0)}(F(\infty)-F(a)) = 0
  \]
  immediately implies the claim.
\end{proof}

\begin{lemma}
  Assume that $\overline{dF}<\infty$. Let $g \in L^1(dF)$ be
  absolutely continuous and such that $g'$ is increasing (possibly
  after a suitable modification on a set of Lebesgue measure
  zero). Then
  \[
    \lim_{a \to \infty} g'(a) \int_a^\infty (F(\infty)-F(y))\,dy = 0,
  \]
  or, equivalently, $\lim_{a \to \infty} C(a)g'(a)=0$.
\end{lemma}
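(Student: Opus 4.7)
The plan is to combine the explicit representation $C(a) = \int_{[a,\infty[} (y-a)\,dF(y)$, established in Section~\ref{sec:PCF}, with the convexity of $g$ implied by the monotonicity of $g'$. The subgradient inequality reads
\[
g(y) - g(a) \geq g'(a)(y-a) \qquad \text{for all } y \geq a,
\]
and integrating this against the positive measure $dF$ over $[a,\infty\mathclose[$ yields the basic bound
\[
g'(a)\,C(a) \leq \int_{[a,\infty\mathclose[} g(y)\,dF(y) - g(a)\bigl(F(\infty) - F(a-)\bigr).
\]

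Since $g'$ is monotone, the limit $L := \lim_{a\to\infty} g'(a)$ exists in $\mathopen]-\infty,+\infty]$. If $L$ is finite, then $g'$ is bounded on $\erre_+$ and, since $C(a) \to 0$ by Proposition~\ref{prop:C}, the product $g'(a) C(a)$ tends to zero trivially. If instead $L=+\infty$, then for $a$ large one has $g'(a)>0$, so $g$ is eventually increasing and, integrating $g'$, $g(a) \to +\infty$; in particular $g'(a)C(a) \geq 0$ for large $a$, and it suffices to show that the right-hand side of the displayed bound tends to zero.

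For this, the first summand $\int_{[a,\infty\mathclose[} g(y)\,dF(y)$ tends to zero by dominated convergence, using $g \in L^1(dF)$. For the second, once $g \geq 0$ and increasing on $[a,\infty\mathclose[$, one writes
\[
g(a)\bigl(F(\infty) - F(a-)\bigr) = \int_{[a,\infty\mathclose[} g(a)\,dF(y) \leq \int_{[a,\infty\mathclose[} g(y)\,dF(y),
\]
and the majorant again tends to zero; this is essentially a mild variant of the previous lemma, applied to $g$ restricted to the half-line where it is increasing. The main subtlety I foresee is just bookkeeping of the signs of $g(a)$ and $g'(a)$, but the split along $L$ finesses this: the finite-$L$ case is immediate, while the $L=+\infty$ case reduces matters to an eventually positive, increasing $g$. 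The equivalence of the two formulations in the statement follows from the identity $C(a) = \int_a^\infty (F(\infty)-F(y))\,dy$ derived in Section~\ref{sec:PCF}.
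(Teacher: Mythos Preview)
Your argument is correct. The core inequality
\[
g'(a)\,C(a) \leq \int_{[a,\infty\mathclose[} g(y)\,dF(y) - g(a)\bigl(F(\infty)-F(a-)\bigr)
\]
coincides with the one the paper derives; you obtain it via the subgradient inequality for the convex function $g$ and then integrate against $dF$, whereas the paper uses the pointwise bound $g'(a) \leq g'(y)$ inside the integral representation $C(a)=\int_a^\infty (F(\infty)-F(y))\,dy$ and then applies Fubini. These two routes are equivalent, since the subgradient inequality $g(y)-g(a)\geq g'(a)(y-a)$ is exactly the integrated form of $g'(y)\geq g'(a)$.

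Where your proof genuinely differs is in the case analysis. The paper reduces to $g'(0)\geq 0$ by adding a linear function, and then further splits according to whether $g$ is eventually nonnegative or everywhere nonpositive. Your split on $L=\lim_{a\to\infty} g'(a)$ is cleaner: when $L$ is finite the conclusion is immediate from boundedness of $g'$ and $C(a)\to 0$, and when $L=+\infty$ you are automatically in the regime where $g$ is eventually positive and increasing, so the tail estimate $0\leq g(a)(F(\infty)-F(a-))\leq \int_{[a,\infty\mathclose[} g\,dF\to 0$ applies directly. This sidesteps both the linear shift and the ``$g\leq 0$ everywhere'' sub-case of the paper.
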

\begin{proof}
  The assumption $\overline{dF}<\infty$ guarantees that the function
  $C$ is well-defined and
  \[
  C(a) = \int_a^\infty (F(\infty)-F(y))\,dy \qquad \forall a \in \erre_+.
  \]
  Then we can write
  \begin{align*}
    g'(a)C(a)
    &= \int_a^\infty g'(a) (F(\infty)-F(y)) \,dy\\
    &\leq \int_a^\infty g'(y) (F(\infty)-F(y)) \,dy\\
    &= \int_a^\infty g'(y) \int_{\mathopen]y,\infty\mathclose[} dF(x)\,dy\\
    &= \int_{\mathopen]a,\infty\mathclose[} \int_a^x g'(y)\,dy \,dF(x)
      = \int_{\mathopen]a,\infty\mathclose[} g(x)\,dF(x)
      - \int_{\mathopen]a,\infty\mathclose[} g(a)\,dF(x),
  \end{align*}
  where
  \[
  \lim_{a \to \infty} \int_{\mathopen]a,\infty\mathclose[} g(x)\,dF(x) = 0
  \]
  because $g \in L^1(dF)$. Moreover,
  \[
    \abs[\bigg]{\int_{\mathopen]a,\infty\mathclose[} g(a)\,dF(x)}
    \leq \int_{\mathopen]a,\infty\mathclose[} \abs{g(a)}\,dF(x).
  \]
  Let us first consider the case that $g'(0) \geq 0$, so that $g'$ is
  positive and $g$ is increasing. If there exists $a_0 \in \erre_+$
  such that $g(a_0) \geq 0$, then 
  \[
    \lim_{a \to \infty} \int_{\mathopen]a,\infty\mathclose[} g(a)\,dF(x)
    \leq \lim_{a \to \infty} \int_{\mathopen]a,\infty\mathclose[} g(x)\,dF(x)
    = 0.
  \]
  Otherwise, if $g(x) \leq 0$ for all $x \in \erre_+$, then $\abs{g} = -g$
  is decreasing, therefore
  \[
    \lim_{a \to \infty} \int_{\mathopen]a,\infty\mathclose[} \abs{g(a)}\,dF(x)
    \leq \lim_{a \to \infty} \abs{g(1)} (F(\infty)-F(a)) = 0.
  \]
  Let us now consider the case that $g'(0)<0$: introduce the function
  $\tilde{g}(x):=g(x) + \abs{g'(0)}x$, for which
  $\tilde{g}'(0)=g'(0)+\abs{g'(0)} \geq 0$, and note that $\tilde{g}'$
  is increasing. The assumption $\overline{dF}<\infty$ implies that
  $\tilde{g} \in L^1(dF)$, hence the previous part of the proof shows
  that $\lim_{a \to \infty} C(a)\tilde{g}'(a)=0$. Writing
  $C(a)g'(a) = C(a)\tilde{g}'(a) - C(a) \abs{g'(0)}$ and recalling
  that $\lim_{a \to \infty} C(a)=0$, the proof is completed.
\end{proof}

Even though the set of functions that can be written as the difference
of two convex functions is quite rich (see, e.g., \cite{BaBo}), it
does not contain any discontinuous function. So, for instance, for
digital options we cannot produce a pricing formula such as
\eqref{eq:cvp}. However, the distributional approach allows to obtain
in a quite efficient way pricing formulas for options the payoff of
which can be written piecewise as the difference of convex
functions. The formulas involve, apart from integrals of $C$, also
pointwise evaluations of $C$ and $F$.
Let $g_0 \colon \erre \to \erre$ be a convex function,
$[a,b] \subset \erre_+$ a compact interval, and
$g:=g_01_{\mathopen[a,b\mathclose[}$ a c\`adl\`ag restriction of $g_0$
that will serve as payoff function of an option, of which we are going
to compute the price.

One has
\[
\int_{\erre_+} g\,dF = \int_{\mathopen[a,b\mathclose[} g\,dF 
= \int_{\mathopen]a,b\mathclose[} g\,dF
+ g(a) \Delta F(a),
\]
where $\Delta F(a) = D^+C(a)-D^+C(a-)$ and, by dominated convergence,
\[
  \int_{\mathopen]a,b\mathclose[} g\,dF
  = \lim_{x \to b-} \int_{\mathopen]a,x]} g\,dF.
\]
Since
\[
\int_{\mathopen]a,x]} g\,dF = \int_{[0,x]} g\,dF
- \int_{[0,a]} g\,dF,
\]
it follows by \eqref{eq:sudo} that
\begin{align*}
\int_{\mathopen]a,x]} g\,dF 
&= \int_{\mathopen]a,x]} C\,dm + C(a) D^+g(a) - D^+C(a) g(a)\\
&\quad - C(x) D^+g(x) + D^+C(x) g(x).
\end{align*}
Taking the limit for $x$ going to $b$ from the left, one obtains,
recalling that $C$ is continuous and both $g$ and $D^+C$ are c\`adl\`ag,
\begin{align*}
\int_{\mathopen]a,b\mathclose[} g\,dF
&= \int_{\mathopen]a,b\mathclose[} C\,dm + C(a) D^+g(a) - D^+C(a) g(a)\\
&\quad - C(b) D^+g(b-) + D^+C(b-) g(b-).
\end{align*}

There is an alternative way to obtain the same formula, using
Proposition~\ref{prop:dcd}, that is slightly longer but that starts
from very basic principles and shows how the distributional approach
allows to compute very quickly the price of an option under the very
mild assumption that $F$ admits a continuous density. Let us consider
$g:=g_01_{\mathopen[a,b\mathclose[}$ as a distribution, and assume
first that $F$ is of class $C^1$, which implies that $C$ is of class
$C^2$. Then
\[
\int_{\erre_+} g\,dF = \int_a^b g\,dF = \ip{g}{C''} = \ip{g''}{C},
\]
where, thanks to Proposition~\ref{prop:dcd},
\begin{align*}
\ip{g''}{C} 
&= \int_{\mathopen]a,b\mathclose[} C\,dm + C(a) D^+g(a) - C'(a) g(a)\\
&\quad - C(b) D^+g(b-) + C'(b) g(b-). 
\end{align*}
If $C$ is not twice continuously differentiable, setting $C_n:=C \ast
\rho_n$, with $(\rho_n)$ a sequence of mollifiers chosen as before, then
$C_n$ and $dF_n := dF \ast \rho_n$ are both in $C^\infty$ and
\begin{align*}
  \int_a^b g\,dF_n = \ip{g''}{C_n}
  &= \int_{\mathopen]a,b\mathclose[} C_n\,dm + C_n(a) D^+g(a) - C_n'(a) g(a)\\
  &\quad - C_n(b) D^+g(b-) + C_n'(b) g(b-).
\end{align*}
We are now going to pass to the limit as $n \to \infty$: $C_n$
converges to $C$ uniformly on compact sets, hence $C_n(a)$ and
$C_n(b)$ converge to $C(a)$ and $C(b)$, respectively, and
\[
\lim_{n \to \infty} \int_{\mathopen]a,b\mathclose[} C_n\,dm
= \int_{\mathopen]a,b\mathclose[} C\,dm.
\]
As before, the choice of $(\rho_n)$ and the right continuity of $D^+C$
imply that $C'_n(a)$ and $C'_n(b)$ converge to $D^+C(a)$ and
$D^+C(b)$, respectively, hence
\begin{align*}
  \lim_{n \to \infty} \int_a^b g\,dF_n
  &= \int_{\mathopen]a,b\mathclose[} C\,dm + C(a) D^+g(a) - D^+C(a) g(a)\\
  &\quad - C(b) D^+g(b-) + D^+C(b) g(b-).
\end{align*}
Writing
\[
  \int_a^b g\,dF_n
  = \int g_01_{\mathopen[a,b\mathclose[} \ast \widetilde{\rho}_n\,dF
\]
we can use again an argument already met before, which shows that
\[
  \lim_{n \to \infty} g_01_{\mathopen]a,b\mathclose[} \ast \widetilde{\rho}_n(x)
  = g_-(x) \qquad \forall x \in \erre,
\]
where $g_-$ denotes the c\`agl\`ad version of $g$. Therefore, by
dominated convergence,
\begin{align*}
\lim_{n \to \infty} \int_a^b g\,dF_n 
= \int_{\mathopen]a,b\mathclose]} g_-\,dF
&= \int_{\mathopen]a,b\mathclose[} g_-\,dF + g(b-)\Delta F(b)\\
&= \int_{\mathopen]a,b\mathclose[} g\,dF + g(b-)(D^+C(b)-D^+C(b-)).
\end{align*}
Rearranging terms we are left with
\[
\int_{\mathopen]a,b\mathclose[} g\,dF 
= \int_{\mathopen]a,b\mathclose[} C\,dm + C(a) D^+g(a) - D^+C(a) g(a)
- C(b) D^+g(b-) + D^+C(b-) g(b-),
\]
as before.

\ifbozza\newpage\else\fi
\section{A representation through approximated laws of logarithmic returns}
\label{sec:X}
In the standard Black-Scholes (BS) model one assumes that $S_T =
\exp(\varsigma\sqrt{T} Z -\varsigma^2T/2)$ in law, where the
volatility $\varsigma$ is constant and $Z$ is a standard Gaussian
random variable. This family of random variables (indexed by
$\varsigma$, with time to maturity $T$ fixed as before) can be
embedded in the larger class defined by $S_T = \exp(\sigma X + m)$,
where $\sigma$ and $m$ are constants, and $X$ is a random variable
with density $f \in L^2:=L^2(\erre)$. This rather general family of
laws can be used as setup for empirical non-parametric option pricing,
essentially by projecting the density $f$ on radial basis functions
(see~\cite{cm:Herm}). More precisely, we consider expansions of $f$ in
terms of Hermite functions, so that the lognormal distribution of
returns corresponds exactly to the zeroth order expansion of $f$. The
approach can thus be thought of as a perturbation of the BS model at
fixed time. The following problem then arises: let $(f_n) \subset L^1
\cap L^2$ be a sequence of functions converging to $f$ in $L^2$, and
let $P_n(k)$ be the ``fictitious'' price of a put option with strike
$k$, obtained replacing the density $f$ with its approximation
$f_n$. Suppose that the $P_n(k)$ are known for all $k \in \erre_+$ and
$n \in \enne$. Is this information enough to determine the function
$P$, i.e. the put option prices in the ``true'' model?

Assuming, for simplicity, that $S_0=1$, $\sigma=1$ and $m=0$, and
denoting the distribution function of $X=\log S_T$ with respect to the
measure $\mu$ by $F_X$, it is immediately seen that $F_X(x) = F(e^x)$
for every $x \in \erre$ and that
\[
P(k) = \int_\erre {(k - e^x)}^+\,dF_X(x)
= \int_\erre {(k - e^x)}^+ f(x)\,dx,
\]
hence,
\[
P_n(k) = \int_\erre {(k - e^x)}^+ f_n(x)\,dx.
\]
Note that $F_X$ and $f$ are supported on the whole real line and that
$F$ and $F_X$ are in bijective correspondence, hence $F_X$ is in
bijective correspondence also with $P$. 

The sequence $P_n(k)$ does not necessarily converge to $P(k)$ as $n
\to \infty$, because the function $x \mapsto (k-e^x)^+$ belongs to
$L^\infty$ but not to $L^2$, hence it induces a continuous linear form
on $L^1$, but not on $L^2$. Moreover, convergence in $L^2(\erre)$ does
not imply convergence in $L^1(\erre)$.

We are going to show that the function $P$ can be reconstructed from
approximation to option prices with payoff of the type
\[
  \theta_{k_1,k_2}(x) := (k_2-e^x)^+ - \frac{k_2}{k_1}(k_1-e^x)^+,
  \qquad k_1, k_2 > 0.
\]
More precisely, to identify the pricing functional $P$, it suffices to
know, for any sequence $(f_n)$ converging to $f$ in $L^2$, the values
$\ip{\theta_{k_1,k_2}}{f_n}$ for all $k_1,k_2>0$ and all $n \geq 0$,
where we recall that $\ip{\cdot}{\cdot}$ stands for the scalar product
of $L^2$.

In fact, for any $k_1,k_2>0$, the function $\theta_{k_1,k_2}$ is in
$L^2$, hence, for any sequence $(f_n) \subset L^1 \cap L^2$ converging
to $f$ in $L^2$ (weak convergence in $L^2$ would also suffice), one has
\[
  P_n(k_2) - \frac{k_2}{k_1} P_n(k_1) =
  \ip[\big]{\theta_{k_1,k_2}}{f_n} \longrightarrow
  \ip[\big]{\theta_{k_1,k_2}}{f} = P(k_2) - \frac{k_2}{k_1} P(k_1).
\]
Moreover,
\[
\frac{k_2}{k_1} P(k_1) 
= \int_\erre \frac{k_2}{k_1} (k_1 - e^x)^+ f(x)\,dx,
\]
where $(k_1 - e^x)^+ \in \mathopen]0,k_1]$ for all $x \in \erre$, hence
$\frac{k_2}{k_1} (k_1 - e^x)^+ \in \mathopen]0,k_2]$ for all $x \in \erre$,
and
\[
\frac{k_2}{k_1} (k_1 - e^x)^+ =
\begin{cases}
  \displaystyle k_2 - \frac{k_2}{k_1} e^x, & \text{ if } x \leq \log k_1,\\[8pt]
  0, & \text{ if } x \geq \log k_1,
\end{cases}
\]
hence
\[
\lim_{k_1 \to 0} \frac{k_2}{k_1} (k_1 - e^x)^+ = 0 \qquad \forall x \in \erre.
\]
Therefore the function $x \mapsto \frac{k_2}{k_1} (k_1 - e^x)^+$
converges to zero as $k_1 \to 0$ in $L^p$ for every
$p \in [1,\infty\mathclose[$ by the dominated convergence theorem. In
particular, since $f \in L^2$,
\begin{equation}
  \label{eq:coa}
  \lim_{k_1 \to 0} \frac{k_2}{k_1} P(k_1) =
  \lim_{k_1 \to 0} \int_\erre \frac{k_2}{k_1} (k_1-e^x)^+ f(x)\,dx = 0.
\end{equation}
We have thus shown that
\[
  \lim_{k_1 \to 0} \lim_{n \to \infty} \ip[\big]{\theta_{k_1,k_2}}{f_n}
  = P(k_2) \qquad \forall k_2 > 0,
\]
thus also the following statement.
\begin{prop}
\label{prop:theta}
  Let $(f_n) \subset L^1 \cap L^2$ be a sequence converging to $f$ in
  $L^2$. There is a bijection between
  \[
    \Bigl(\ip[\big]{\theta_{k_1,k_2}}{f_n}\Bigr)_{%
      \begin{subarray}{l}
        k_1,k_2>0 \\ n \geq 0
      \end{subarray}}
  \]
  and $P$.
\end{prop}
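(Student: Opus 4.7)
The plan is to observe that the bulk of the argument is already present in the exposition preceding the statement, so the proof essentially reduces to organizing that material into the two directions of the claimed bijection, together with one integrability verification.

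First I would verify that $\theta_{k_1,k_2} \in L^2(\erre)$ for every $k_1,k_2 > 0$, so that both $\ip{\theta_{k_1,k_2}}{f_n}$ and $\ip{\theta_{k_1,k_2}}{f}$ are well defined and the former converges to the latter as $n \to \infty$. The key computation is for $x \leq \log k_1$, where
\[
  \theta_{k_1,k_2}(x) = (k_2 - e^x) - \frac{k_2}{k_1}(k_1 - e^x)
  = e^x\Bigl(\frac{k_2}{k_1} - 1\Bigr),
\]
which decays exponentially as $x \to -\infty$. For $x \geq \log k_2$ the function vanishes, and on the intermediate interval it is bounded. Thus $\theta_{k_1,k_2} \in L^2 \cap L^\infty$, the crucial point being that neither put payoff alone lies in $L^2$, yet their specific linear combination does.

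For the direction from the data to $P$, I would argue as follows: by continuity of the $L^2$ inner product and the hypothesis $f_n \to f$ in $L^2$,
\[
  \lim_{n \to \infty} \ip[\big]{\theta_{k_1,k_2}}{f_n}
  = \ip[\big]{\theta_{k_1,k_2}}{f}
  = P(k_2) - \frac{k_2}{k_1} P(k_1),
\]
and then \eqref{eq:coa} gives $\lim_{k_1 \to 0^+} (k_2/k_1) P(k_1) = 0$, so the iterated limit reconstructs $P(k_2)$ from the data for every $k_2 > 0$. For the converse direction, given $P$ (equivalently, the density $f$) together with the fixed approximating sequence $(f_n)$, each value $\ip[\big]{\theta_{k_1,k_2}}{f_n}$ is computable directly from $f_n$ via $\ip[\big]{\theta_{k_1,k_2}}{f_n} = P_n(k_2) - (k_2/k_1) P_n(k_1)$, so the data is determined.

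I expect no real obstacle: the only delicate analytic ingredient is \eqref{eq:coa}, which has already been established via dominated convergence using $f \in L^2$. The remainder is bookkeeping that packages the already-derived double-limit formula into the bijective correspondence.
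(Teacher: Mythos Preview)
Your proposal is correct and follows essentially the same route as the paper: the argument preceding the statement already establishes that $\theta_{k_1,k_2}\in L^2$, that $\ip{\theta_{k_1,k_2}}{f_n}\to P(k_2)-\tfrac{k_2}{k_1}P(k_1)$, and that the iterated limit via \eqref{eq:coa} recovers $P(k_2)$; you merely organize this into the two directions of the bijection and supply the explicit $L^2$ check (note that your case split tacitly assumes $k_1\leq k_2$, but the conclusion holds in general since the function is bounded with support bounded above and equals $e^x(k_2/k_1-1)$ for $x\leq\log(k_1\wedge k_2)$).
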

\noindent Completely analogously, if $P(k_1)$ is known, then
\[
  P(k_2) = \frac{k_2}{k_1} P(k_1)
  + \lim_{n \to \infty} \ip[\big]{\theta_{k_1,k_2}}{f_n}
  = \frac{k_2}{k_1} P(k_1)
  + \lim_{n \to \infty} \Bigl( P_n(k_2)
  - \frac{k_2}{k_1}P_n(k_1) \Bigr).
\]
\begin{rmk}
  The function $x \mapsto \frac{k_2}{k_1} (k_1 - e^x)^+$ does not
  converge to zero in $L^\infty$ as $k_1 \to 0$, as
  \[
  \sup_{x \in \erre} \frac{k_2}{k_1} (k_1 - e^x)^+ = k_2.
  \]
  However, the convergence in \eqref{eq:coa} also holds with
  $f \in L^1$, i.e. without any extra integrability assumption on $f$,
  because $\frac{k_2}{k_1} (k_1 - e^x)^+ f(x) \leq k_2 f(x)$ for every
  $x \in \erre$, hence the result follows by dominated convergence.
\end{rmk}

Note that Proposition~\ref{prop:theta} can be interpreted as a
representation of $dF$, but, as discussed at the end of
\S\ref{sec:pf}, it cannot be formulated in the language introduced
there. Even the extended measurements of the type $(g_j,\pi_j,dF_j)$,
with $dF_j$ a family of measures weakly converging to $dF$, is not
enough. In fact, it is not difficult to check that the push-forward of
$f_n\,dx$ through $x \mapsto e^x$, denoted by $dF_n$, does not
converge weakly to $dF$, in general. However, setting $M_1 =
(g_k,dF_n(g_k),dF_n)_{k>0,\,n \in \enne}$, where $g_k \colon x \mapsto
(k-x)^+$, we have shown that $M_1$
``implies'' $M_2=(\theta_{k_1,k_2},\pi_{k_1,k_2})_{k_1,k_2>0}$, where
implication is meant as in the last paragraph of \S\ref{sec:pf}, and that
$M_2^m$ is finer than $M$, the measurement set composed of put prices,
which is a representation.

\ifbozza\newpage\else\fi
\bibliographystyle{amsplain}
\bibliography{ref,finanza}

\end{document}